\newtheorem{theorem}{Theorem}[section]
\newtheorem{proposition}[theorem]{Proposition}
\newtheorem{lemma}[theorem]{Lemma}
\newtheorem{claim}[theorem]{Claim}
\newtheorem{corollary}[theorem]{Corollary}
\def\P{\mathbb{P}}
\def\bW{{\bf v}}
\def\E{\mathbb{E}}
\def\A {\mathcal{A}}
\def\argmax{\text{argmax}}
\begin{document}

\title{Maximizing Efficiency in Dynamic Matching Markets}

\author{Itai Ashlagi, Maximilien Burq, Patrick Jaillet, Amin Saberi.}

\date{}
\maketitle
\begin{abstract}

We study the problem of matching  agents who arrive at a marketplace over time and leave after $d$ time periods. Agents can only be matched while they are present in the marketplace. Each pair of agents can yield a different match value, and the planner's goal is to maximize the total value over a finite time horizon. We study matching algorithms that perform well over any sequence of arrivals when there is no a priori information about the match values or arrival times.

Our main contribution is a $\nicefrac{1}{4}$-competitive algorithm. The algorithm randomly selects a subset of agents who will wait until right before their departure to get matched, and maintains a maximum-weight matching with respect to the other agents. The primal-dual analysis of the algorithm hinges on a careful comparison between the initial dual value associated with an agent when it first arrives, and the final value after $d$ time steps.

It is also shown that no algorithm is  $\nicefrac{1}{2}$-competitive. We  extend the model to the case in which departure times are drawn i.i.d from a distribution with non-decreasing hazard rate, and establish a $\nicefrac{1}{8}$-competitive algorithm in this setting. Finally we show on real-world data that a modified version of our algorithm performs well in practice.
\end{abstract}


\section{Introduction}

We study  the problem of  matching agents who arrive to a marketplace over time and leave after a short period. Agents can only be matched while they are present in the marketplace. There is a different value for matching every pair of agents, which does not vary over time. The planner's goal is to maximize the total value over a given finite time horizon.

Several marketplaces face a such a problem. Ride-hailing platforms have to match passengers with drivers, in which case the value of a match may depend on to the distance between the driver and the passenger. Such platforms may also carpool passengers and hence match passengers with each other, in which case the value of a match can represent the reduction in total distance traveled by the two matched passengers, compared to the distance traveled in individual rides. Kidney exchange platforms face the problem of matching incompatible patient-donor pairs with each other. In this context the value of a match can represent, for example, the quality adjusted life years due to the transplant.
The common challenge in all these applications comes from the uncertainty associated with future arrivals and potential future matches.

We study matching algorithms that perform well across any sequence of arrivals, when there is no a priori  information about the match values or arrival times. The underlying graph structure may be arbitrary and is not necessarily bipartite. Agents can be matched at any moment between their arrival and their departure. In that sense, our framework differs from the classic online matching literature where matching decisions have to be made immediately upon the arrival of an agent.

One important assumption we make is that each agent departs from the market  exactly $d$ time periods after her arrival. In the case of the carpooling application one may think of $d$ as a (self-imposed) service requirement, which ensures that no passenger waits for too long before being matched. In that case, after $d$ periods the passenger is assigned to an individual ride. Later on, we relax this assumption to allow for stochastic departures. It is worth noting that when departure times are allowed to be arbitrary, the competitive ratio of any algorithm is unbounded.

\subsection*{Main contributions}

We introduce an algorithm, termed {\it Postponed Dynamic Deferred Acceptance} (PDDA), that achieves a competitive ratio of $\nicefrac{1}{4}$. We further show that no algorithm  achieves a competitive ratio that is higher than $\nicefrac{1}{2}$.

A key step of the algorithm is to artificially create a two-sided market by randomly assigning each agent to either be a {\it buyer} or a {\it seller}. In this market, buyers will ``bid'' to match with a desired seller. With each seller that arrives to the market we associate a price, which is initiated to zero. For each buyer, the marginal utility of matching with a given seller is the value from the match with that seller minus the price that the seller  demands. The algorithm maintains these virtual prices (for sellers) and profit margins (for buyers).

Once a buyer  joins the  market she triggers  a bidding process similar to an ascending auction \citep{kuhn1955hungarian,demange1986multi,bertsekas1988auction}.
This ascending bidding process maintains a tentative matching. A tentative match between a buyer and a seller is converted into a real match only if the seller has been present for $d$ time periods and is about to depart. At that time, both the seller and the buyer to whom she is matched depart from the market. Given that sellers are patient and choose their match in the last minute, a buyer will never bid on sellers who arrive after her. This carefully chosen bidding and matching process guarantees that the profits of the agents (sellers' prices and buyers' marginal utilities) are monotone over time, which is crucial for our primal-dual competitive analysis.

We now describe in more detail how the agents are assigned to become buyers or sellers. One simple way to do so is to flip an unbiased coin, independently for each agent, at the time of arrival. We improve upon this naive assignment by introducing two copies of each agent, and assign one to be a seller and the other to be a buyer. This enables us to postpone the decision until we have more information about the graph structure and the likely matchings.

We extend our model to the case in which departures are stochastic. We show that when the departure distribution has a non-decreasing hazard rate and departure times are known at the last minute, we can adapt our algorithm to achieve a competitive ratios of $\nicefrac{1}{8}$.

\subsection*{Related literature}

This work has ties to the online  matching  problem. In the classical problem proposed by \cite{kvv}, the graph is bipartite and a finite number of vertices on one side are waiting, while vertices on the other side arrive dynamically and have to be matched {\it immediately} upon arrival. This work has numerous extensions, for example to stochastic arrivals, and the adwords setting \cite{mehta2007adwords,aryanak_randominput,aryanak_stmatching,mos,STMatchingPatrick}. See \cite{mehta2013online} for a detailed survey.

Our work differs from the above line of work in three ways. First, we allow both sides of the graph to arrive and depart dynamically. This is useful for applications such as dynamic matching of drivers to passengers. Second, we are able to provide algorithms that perform well in the case of edge-weighted inputs. Lastly, we do not require the graph to be bipartite, which is useful in the case of kidney exchange, as well as dynamic matching of carpooling users.

This paper is also related to a presentation by \cite{dutta2017}, which looks at a model where the online algorithm uses advance knowledge of future arrivals. Closely related is  \cite{huang2018}, who study a similar model but  consider the unweighted case  and establish bounds also for adverserial departures.

A large literature has been dedicated to the   \emph{static} matching  with heterogeneous match values and a particular problem is finding a maximum-weight matching problem efficiently. Classic algorithms that have been proposed include the \emph{Hungarian} algorithm \citep{kuhn1955hungarian}, and \emph{auction} algorithms \cite{demange1986multi,bertsekas1988auction}. For the case, in which agents have ordinal preferences, \citet{gale1962college} proposed the \emph{Deferred Acceptance} algorithm which finds a \emph{stable} matching. 
Our work builds on these algorithms by maintaining a tentative maximum-weight matching over time, and matches are made final when a seller is about to depart.

There is also a growing  literature that focuses on dynamic matching in the context of  kidney exchange, initiated by \citet{Utku}. Most papers focus on random graphs, and study various questions such as: the effect of long cycles and chains \cite{anderson2015dynamic}, the effect of edge failure \cite{dickerson2013failure}, or the effect of a having vertex heterogeneity \cite{ashlagi2017matching}. Closer to out paper is work by \cite{akbarpour2017thickness} which studies the effect of knowing when a vertex is about to depart. They show that in a sparse random graph model, matching vertices only when they become  \textit{critical}  performs well. Our contribution with respect to these papers are the following. First we provide a framework to study dynamic matching without requiring a specific random graph model. Second, these papers focus on the non-weighted case and in contrast we consider the {\it weighted} case. Lastly, we show that in a worst-case setting, having information on \textit{critical} vertices is necessary but may not be sufficient to obtain an efficient matching algorithm: optimization is also needed.

Few papers have considered the problem of dynamic matching when matches yield different values. In the adversarial setting \citet{emek2016online,ashlagi2017min} study the problem of minimizing the sum of distances between matched agents and the sum of their  waiting times. Their model has no departing agents and everyone is matched and our model does not account for agents' waiting times.in the case where o
Several papers  consider the problem of dynamic matching in stochastic environments  \citep{baccara2015optimal,ozkan2016dynamic,hu2016dynamic} (the latter paper allows agents to depart the market). These papers find that some accumulation of agents  is beneficial for improving efficiency. More recently, \citep{truong2018} study matching when the graph is bipartite and vertices arrive stochastically. They provide a $\nicefrac{1}{4}$ approximation in the case where one side departs immediately while the other side departs after some arbitrary time.


\section{Model}

We consider a model with a finite horizon, in which at each period $i=1\dots,T$ a single agent, also denoted $i$, arrives to the market. Note that for the simplicity of notation, we refer to the agent that arrives at time as agent $i$.
The value (weight) from matching agent $i$ with agent $j$ is denoted by $v_{i,j} \geq 0$.
Each agent in our model has to leave exactly $d$ time steps after arriving to the market.
Agents can only be matched when they are in the market and matched agents are removed from the market immediately.

We assume that the planner knows $d$ as well as the match value between any pair of agents $i$ and $j$ only if both $i$ and $j$ are present in the market. We say that a vertex is \emph{critical} when she has been in the market for $d$ time periods. In that case, the planner has the option to match her before she leaves. The planner has no distributional information about future arrivals or future  match values.

The planner's goal is to find a matching that will yield a maximum total value for any set of values $\bW$ and any order of arrivals. In  particular we are interested in  designing  a matching algorithm that achieves a high competitive ratio with respect to the maximum-weight matching at hindsight.

It will be convenient to represent the input as a {\it graph}, where each agent is represented by a vertex, and there is an undirected edge between vertices $i$ and $j$ if and only if $|i - j| \leq d$.

To illustrate a natural tradeoff, consider the example in  Figure \ref{fig:hard:basic}, in which every agent remains $d=1$ periods in the market. At period $2$ the planner can either match agents $1$ and $2$ or let agent $1$ leave the market unmatched. This simple example show that no deterministic algorithm can obtain a constant competitive ratio. Furthermore, no algorithm can achieve a competitive ratio higher than $\nicefrac{1}{2}$. See  Section \ref{sec:hardness} for a more detailed discussion.

\begin{figure}[H]
  \centering
  \begin{tikzpicture}[pre/.style={<-,shorten <=1.5pt,>=stealth,thick}, post/.style={->,shorten >=1pt,>=stealth,thick}, scale=0.7]
  \tikzstyle{every node}=[draw,shape=rectangle,minimum size=5mm, inner sep=0];
  \tikzstyle{edge} = [draw,thick,-]
  \tikzstyle{every node}=[shape=circle,minimum size=8mm, inner sep=0];

  \draw [fill](-2,0) circle [radius=0.2];
  \node [right] at (-2.7,-0.7) {$1$};
  \draw [fill](0,0) circle [radius=0.2];
  \node [right] at (-0.7,-0.7) {$2$};
  \draw [fill](2,0) circle [radius=0.2];
  \node [right] at (1.3,-0.7) {$3$};

  \draw[line width=1.5pt] [-] (-2+0.35, 0) .. controls(-1,0) ..(0-0.35, 0);
  \node [right] at (-1.7,0.7) {\small $v_{1,2} = 1$};
  \draw[line width=1.5pt] [-] (0+0.35, 0) .. controls(1,0) .. (2-0.35, 0);
  \node [right] at (0.3,0.7) {\small $v_{2,3} = y$};
  \end{tikzpicture}
  \caption{Let $d = 1$. Agent $1$ becomes critical before the arrival of agent $3$. Therefore, the planner needs to decide whether to match $1$ and $2$ and collect $v_{1,2}$ without knowing $y$.}
  \label{fig:hard:basic}
\end{figure}
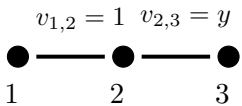


\section{Main results}

The example in Figure \ref{fig:hard:basic} illustrates a necessary condition for the algorithm to achieve a constant competitive ratio: with some probability, vertex $2$ needs to forgo the match with vertex $1$ and wait until she becomes critical. In general, we ensure this property by assigning every agent to be either a {\it seller} or a {\it buyer}.  Buyers may get matched and leave the market at any time but sellers do not match before they become critical.

We will  first consider  the case in which the underlying graph is  bipartite, with buyers on one side and sellers on the other. To ensure that sellers never leave before they become critical, we further assume that there is no edge between a seller $s$ and any buyer $b$ who arrived before $s$. Such a graph is called a {\it constrained} bipartite graph.

In Section \ref{sec:bipartite} we introduce an algorithm for the case in which  the input graph is constrained bipartite. Through a primal-dual analysis,  we establish that the algorithm achieves at least \nicefrac{1}{2} of the offline total reward.

In Section \ref{sec:arbitrary} we consider  arbitrary graphs. For such graphs, we artificially generate a two-sided market by creating a buyer and a seller copy of each vertex. The matching produced by the algorithm in this  bipartite graph, is then transformed into a matching in the original graph by a carefully constructed randomized process. This process loses an additional factor of $2$.

\subsection{Algorithm for constrained bipartite graphs}
\label{sec:bipartite}

We assume here that the input is a constrained bipartite graph. Vertices on one side are called {\it buyers} and vertices on the other side are called  {\it sellers}, and there is no edge between a buyer $b$ and a seller $s$ if $b$ arrives before $s$ ($b<s$).

Note that even in this simplified setting no online algorithm can find the optimum solution. This is because when a new buyer arrives, she may create an augmenting path in the offline graph in which some of the vertices are already departed. We describe this in more details in  Claim \ref{cl:ex:bipartite_constrained}.

We  introduce the Dynamic Deferred Acceptance (DDA) algorithm, which takes as input a constrained bipartite graph and returns a matching.
The main idea is to maintain a temporary maximum-weight matching $m$ at all times during the run of the algorithm. This matching is updated according to an auction mechanism: every seller $s$ is associated with a \emph{price} $p_s$, which is initiated at zero upon arrival. Every  buyer $b$ that is present in the market is associated with a \emph{profit margin} $q_b$ which corresponds to the value of matching to their most prefered seller minus the price associated with that seller.
Every time a new buyer joins the market, she bids on her most prefered seller at the current set of prices. This triggers a bidding process that terminates when no unmatched buyer can profitably bid on a seller.

A tentative match between a buyer and a seller is converted into a real match only if the seller is critical, i.e. she has been present for $d$ time periods and is about to depart. At that time, both the seller and the buyer to whom she is matched depart from the market. This ensures that sellers never get matched before they become critical. If a buyer becomes critical we let her depart unmatched.

\begin{algorithm}[H]
  \caption{The Dynamic Deferred Acceptance algorithm}\label{alg:DDA}
  \begin{itemize}
  \item At any point $t$ during the algorithm, maintain a set of sellers $S_t$, a set of buyers $B_t$, as well as a matching $m$, a price $p_s$ for every seller $s \in S_t$, and a marginal profit $q_b$ for every buyer $b \in B_t$.
  \item Process each event in the following way:
  \begin{enumerate}
    \item\label{step:arrival:seller}\emph{Arrival of a seller s:} Initialize $p_s \leftarrow0$ and $m(s) \leftarrow \emptyset$.
    \item\label{step:arrival:buyer}\emph{Arrival of a buyer $b$:} Start the following {\em ascending auction.} \\
    Repeat
      \begin{enumerate}
         \item Let $q_b \leftarrow \max_{s' \in S_t} v_{s', b} - p_{s'}$ and $s \leftarrow \argmax_{s' \in S_t} v_{s', b} - p_{s'}$.
         \item If $q_b > 0$ then
         \begin{enumerate}
         \item $p_s \leftarrow p_s+\epsilon$.
         \item $m(s)\leftarrow b$ (tentatively match $s$ to $b$)
         \item Set $b$ to $\emptyset$ if $s$ was not matched before. Otherwise, let $b$ be the previous match of $s$.
    	\end{enumerate}
    \end{enumerate}
    Until $q_b \leq 0$ or $b = \emptyset$.
    \item\label{step:departure:seller} \emph{Departure of a seller s:} If seller $s$ becomes critical and  $m(s) \neq \emptyset$, finalize the matching of $s$ and $m(s)$ and collect the reward of $v_{s, m(s)}$.

  \end{enumerate}
\end{itemize}
\end{algorithm}

The ascending auction phase  in our algorithm is similar to the auction algorithm by \cite{bertsekas1988auction}. Prices (for overdemanded sellers) in this auction increase by  $\epsilon$ to ensure termination, and optimality is proven  through $\epsilon$-complementary slackness conditions. For the simplicity of exposition we presented the auction algorithm but for the analysis, we consider the limit $\epsilon \rightarrow 0$ and assume the auction phase terminates with the maximum weight matching. Another way to update the matching is through the  Hungarian algorithm \cite{kuhn1955hungarian}, where  prices are increased simultaneously along an alternating path that only uses edges for which the dual constraint is tight.

The auction phase is always initiated at the existing  prices and profit margins. This, together with the fact that the graph is bipartite, ensures that prices never decrease and and marginal profits never increase throughout the algorithm. Furthermore, the prices and marginal profits of the vertices that are present in the market form an optimum dual for the matching linear program (see Appendix \ref{app:missing_proofs} for more details).
\begin{lemma}
\label{lem:mon}
Consider the DDA algorithm on a constrained bipartite graph.
\begin{enumerate}
\item Throughout the algorithm,  prices corresponding the sellers never decrease and the profit margins of buyers never increase.
\item At the end of every ascending auction,  prices of the sellers and the marginal profits of the buyers form an optimal solution to the dual  of the matching linear program associated with  buyers and sellers present at that particular time.
\end{enumerate}
\end{lemma}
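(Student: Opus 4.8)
The plan is to treat the two claims separately, and in each case to track a single well-chosen quantity over time. For the prices the first claim is immediate: a price $p_s$ is created equal to $0$ when seller $s$ arrives (step~\ref{step:arrival:seller}), and is afterwards touched only inside an ascending auction, where a seller who receives a bid has her price strictly increased by $\epsilon$; no step ever lowers or resets a price. Hence, while $s$ is in the market, $p_s$ is non-decreasing, and the same holds in the limit $\epsilon\to 0$ (and in the alternative Hungarian implementation, where prices move up only along tight alternating paths).

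For the profit margins I would use the \emph{constrained} structure. Fix a buyer $b$, present on $[b,b+d]$, and for $t$ in that window set
\[
  \Phi_b(t)\;=\;\max\bigl\{\,v_{s,b}-p_s(t)\;:\;s\in S_t,\ (b,s)\text{ an edge}\,\bigr\},
\]
the value that $q_b$ would take if recomputed at time $t$. First I would show $\Phi_b$ is non-increasing in $t$: for each fixed $s$ the term $v_{s,b}-p_s(t)$ is non-increasing by the previous paragraph; sellers can only leave $S_t$ as $t$ grows, which can only lower a maximum; and a seller $s'$ that \emph{enters} $S_t$ after $b$ has $s'>b$, so by the constrained assumption there is no edge $(b,s')$ and $s'$ never contributes to the maximum. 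The stored margin $q_b$ always equals $\Phi_b(\tau)$ for $\tau\le t$ the last moment $q_b$ was (re)computed --- when $b$ arrives, and each time $b$ is bumped off its tentative seller and becomes the active bidder again --- so, since these times increase while $\Phi_b$ decreases, the recorded $q_b$ is non-increasing over the life of $b$. That settles part~1.

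For part~2, I would reduce to the correctness of the ascending-auction (equivalently Hungarian) subroutine and then verify that its invariants survive from one auction to the next. The relevant dual of the matching LP on the present vertices is $\min \sum_b q_b^{+}+\sum_s p_s$ subject to $q_b^{+}+p_s\ge v_{b,s}$ on every edge and $q_b^{+},p_s\ge 0$, where $q_b^{+}=\max(q_b,0)$ absorbs the buyers with non-positive recorded margin; by LP duality it suffices to exhibit, at the end of an auction, a matching together with feasible dual values obeying complementary slackness. The object $m$ is always a genuine matching on true edges (the active bidder is unmatched when it acquires a seller, and a seller once matched stays matched), and the auction terminates by the usual argument (each bid raises some price, a price cannot keep rising past the largest value any buyer places on that seller, and the set of matched sellers never shrinks). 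At termination, in the $\epsilon\to0$ limit, each present buyer is either matched to a seller $s$ with $q_b=v_{s,b}-p_s=\max_{s'}(v_{s',b}-p_{s'})\ge0$ --- so the $m$-edge is tight and $q_b+p_{s'}\ge v_{s',b}$ for all $s'$ --- or unmatched with $q_b\le 0$, which (its bidding stopped only at $\max_{s'}(v_{s',b}-p_{s'})\le0$, and prices only rose afterward) gives $p_{s'}\ge v_{s',b}$ for every present seller; and each seller has $p_s\ge 0$, with $p_s=0$ whenever $s$ is unmatched since it never received a bid. This is exactly primal feasibility, dual feasibility, and complementary slackness.

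The step I expect to be the main obstacle is making part~2 hold for \emph{all} present vertices, including buyers carried over from earlier auctions whose margins were fixed long before. For such a buyer $b$ matched to $s$: either $b$ gets bumped in the new auction and re-bids, hence is handled like any active bidder; or $m(s)=b$ persists, in which case $p_s$ did not change while all other prices only rose, so $s$ is still an $\argmax$ seller for $b$ and the recorded $q_b=v_{s,b}-p_s$ is still correct --- and edges from $b$ to sellers that arrived after $b$ simply do not exist, so the constrained assumption again closes the argument. Once this cross-auction consistency (which is essentially part~1 together with ``constrained'') is established, and one keeps track of the harmless clamping of non-positive margins to $0$, part~2 follows from the textbook correctness of the auction / Hungarian algorithm.
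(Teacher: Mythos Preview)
Your proposal is correct and, for part~2, tracks the paper's argument closely: both establish optimality by checking complementary slackness at the end of each auction and then verifying that the invariants survive seller arrivals, seller departures, and carried-over buyers (your explicit ``cross-auction consistency'' paragraph is exactly the inductive step the paper leaves implicit). The main difference is in part~1 for buyer margins. The paper works inside the Hungarian subroutine and reads off monotonicity mechanically from the dual-update rule --- in each round one increases $p_s$ on red sellers and decreases $q_b$ on blue buyers by the same $\delta$ --- so non-increase of $q_b$ is a byproduct of the implementation. You instead argue structurally via $\Phi_b(t)=\max\{v_{s,b}-p_s(t):s\in S_t,\ (b,s)\text{ an edge}\}$, combining price monotonicity with the \emph{constrained} hypothesis to rule out later-arriving sellers from the maximum, and then note that the stored $q_b$ samples $\Phi_b$ at increasing times. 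Both routes are valid; yours has the virtue of isolating precisely where the constrained-bipartite assumption enters the monotonicity claim, while the paper's mechanical argument dovetails directly with the conservation identity it needs next for Claim~\ref{cl:monotonicity}.
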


Maintaining a  maximum-weight matching along with optimum dual variables does not guarantee an efficient matching for the whole graph. The dual values are not always feasible for the offline problem. Indeed, the profit margin of some  buyer $b$ may  decrease after some seller departs the market. This is because $b$ may face increasing competition from new buyers, while the bidding process excludes sellers that have already departed the market (whether matched or not).

\begin{proposition}
  \label{prop:factor2}
  DDA is $\nicefrac{1}{2}$-competitive for constrained bipartite  graphs.
\end{proposition}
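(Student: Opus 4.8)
The plan is a primal--dual argument. Since the graph is bipartite, we work with the dual of the matching LP, which assigns a nonnegative variable $p_s$ to each seller and $q_b$ to each buyer subject to $p_s+q_b\ge v_{s,b}$ for every edge $(s,b)$; by weak duality $\mathrm{OPT}\le \sum_s p_s+\sum_b q_b$ for any feasible choice. From a run of \alg{} I would build one such dual: for each seller $s$ let $\hat p_s$ be its price at the moment it becomes critical --- equivalently, by Lemma~\ref{lem:mon}(1), the largest price it ever has, since prices change only inside ascending auctions --- and for each buyer $b$ let $\hat q_b$ be its profit margin immediately after the auction triggered by its own arrival --- equivalently, by Lemma~\ref{lem:mon}(1), the largest margin it ever has. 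It then suffices to prove (i) $(\hat p,\hat q)$ is feasible for the offline dual, and (ii) $\sum_s\hat p_s+\sum_b\hat q_b\le 2\,\mathrm{ALG}$, where $\mathrm{ALG}$ is the total value collected by \alg{}.

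For feasibility, fix an edge $(s,b)$; the constraint on the bipartite graph forces $s<b$, and $|s-b|\le d$ forces $s$ to still be present when $b$ arrives. At the end of $b$'s arrival auction the pair $(p,q)$ is an optimal, hence dual-feasible, solution for the agents then present (Lemma~\ref{lem:mon}(2)), so $\hat q_b=q_b\ge v_{s,b}-p_s\ge v_{s,b}-\hat p_s$ by monotonicity of prices. For (ii) I would treat the two sides separately. Seller side: a seller whose price ever becomes positive has been bid on, hence is tentatively matched from then on, and since a seller always becomes critical strictly before the buyer it is matched to (again $s<b$), it is ultimately finalized; thus $\sum_s\hat p_s=\sum_{(s,b)\in M}\hat p_s$ where $M$ is the set of finalized pairs, and complementary slackness at the last auction before $s$ departs (where $m(s)=b$ is unchanged) gives $\hat p_s\le \hat p_s+q_b=v_{s,b}$, so $\sum_s\hat p_s\le\sum_{(s,b)\in M}v_{s,b}=\mathrm{ALG}$.

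The buyer side is the main obstacle, and I would handle it with a conservation argument. Let $\Phi(t)=\sum_s p_s(t)+\sum_b q_b(t)$ be the total dual value carried by the agents present at time $t$. The crux is that $\Phi(t)$ always equals the maximum weight of a matching on the present agents: Lemma~\ref{lem:mon}(2) gives this right after each auction, and one checks that deleting an unmatched departing agent (whose dual value is $0$ by complementary slackness) or deleting a finalized pair $(s,b)$ (the two endpoints of a tight edge) preserves complementary slackness, hence optimality. Consequently $\Phi$ starts and ends at $0$, changes nothing at a seller arrival (the new seller has zero price and no edge to any present agent), increases only at buyer arrivals --- say by $\Delta_b$ at the arrival of $b$ --- and decreases only at finalizations, each time by exactly $p_s+q_b=v_{s,b}$; summing the up- and down-jumps, $\sum_b\Delta_b=\mathrm{ALG}$. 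Finally $\hat q_b\le\Delta_b$: the optimal dual realizing $\Phi^{+}$ (the value of $\Phi$ just after $b$'s auction) assigns $\hat q_b$ to $b$, and its restriction to the other present agents is a feasible dual for them of value $\Phi^{+}-\hat q_b$, which is at least the value $\Phi^{-}$ of $\Phi$ just before $b$ arrived (the optimum for exactly that agent set); hence $\hat q_b\le\Phi^{+}-\Phi^{-}=\Delta_b$ and $\sum_b\hat q_b\le\mathrm{ALG}$. Combining with (i) gives $\mathrm{OPT}\le\sum_s\hat p_s+\sum_b\hat q_b\le 2\,\mathrm{ALG}$. The points that need care are the bookkeeping that $\Phi$ remains the max-weight-matching value through every departure event, and the two uses of monotonicity (Lemma~\ref{lem:mon}(1)) that let the values ``at the critical time'' and ``just after arrival'' stand in for the running dual values.
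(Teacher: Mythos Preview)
Your proof is correct and follows the same primal--dual scheme as the paper: both use the dual $(\hat p,\hat q)=(p^f,q^i)$, verify feasibility via price monotonicity exactly as in the paper, and then bound $\sum_s\hat p_s+\sum_b\hat q_b$ by $2\,\mathrm{ALG}$. The one substantive difference is in how the buyer side is handled. The paper proves an \emph{exact} conservation lemma (Claim~\ref{cl:monotonicity}): each ascending auction leaves the total dual of the previously present agents unchanged, which yields $\Delta_b=\hat q_b$ and hence $\sum_b\hat q_b=\mathrm{ALG}$ on the nose. You instead obtain only the inequality $\hat q_b\le\Delta_b$ by restricting the post-auction optimal dual to the old agents and invoking weak duality against $\Phi^-$; this is weaker but sufficient, and it has the virtue of using only the \emph{statement} of Lemma~\ref{lem:mon}(2) rather than the internal mechanics of the auction that the paper unpacks in Appendix~\ref{app:missing_proofs}. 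Likewise, your seller-side bound $\sum_s\hat p_s\le\mathrm{ALG}$ is argued directly (positively priced sellers stay matched and are finalized, and $\hat p_s\le v_{s,m(s)}$ by complementary slackness), whereas the paper reads it off from $\sum_s p_s^f+\sum_b q_b^f=\mathrm{ALG}$. These are presentational variations on the same argument; the potential-function bookkeeping you flag (that $\Phi$ remains the optimum through departures) is indeed the only point requiring care, and your justification via preserved complementary slackness is correct.
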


The proof, given in Section \ref{sec:analysis}, relies on a primal-dual argument. For any arriving buyer $b$ we denote by $q_b^i$ her \emph{initial profit margin} after the ascending auction 
terminates.  When a buyer $b$ is matched or departs, we set $q_{b}^f$ to be her \emph{final profit margin} at that time. Similarly when a seller $s$ departs or is matched, we denote by $p_s^f$ her \emph{final price} at that time.

The proof of the proposition relies on the following three ingredients. First, letting  $S = \cup_{t \in [1,T]} S_t$ and $B = \cup_{t \in [1,T]} B_t$, then the algorithm collects
$$\sum_{s \in S} v_{s, m(s)} = \sum_{s \in S} p_s^f + \sum_{b \in B} q_b^f.$$
Second, although the  final dual variables $(p^f, q^f)$ are not dual feasible, the pair $(p^f, q^i)$ is a feasible dual solution of the offline matching problem.
Finally, we  obtain a  factor $2$ by observing that  $$\sum_{s\in S} p_s^f + \sum_{b \in B} q_b^f = \sum_{b \in B} q_b^i.$$

\subsection{Arbitrary graphs}
\label{sec:arbitrary}

In the previous section, we constructed an algorithm for constrained bipartite graphs, we now extend it to arbitrary graphs.

A naive way to generate a constrained bipartite graph from an arbitrary one is to randomly assign each vertex to be  either a seller or a buyer, independently and with probability \nicefrac{1}{2}. Then we remove any edge between each seller and buyers who have arrived before her. This approach yields the {\it Simple Dynamic Deferred Acceptance} (SDDA) algorithm:

\begin{algorithm}
\caption{Simple Dynamic Deferred Acceptance (SDDA)}
\begin{itemize}
\item For each vertex $t=1,\ldots, T$:
\begin{enumerate}
\item[]  Toss a fair coin to decide whether $t$ is a  \emph{seller} or a \emph{buyer}. Construct the corresponding constrained bipartite graph by keeping only the edges between each buyer and the sellers who arrived at most $d$ steps before her.
\end{enumerate}
\item Run the DDA algorithm on the resulting constrained bipartite graph.
\end{itemize}
\end{algorithm}

\begin{corollary}
  \label{cor:factor8}
  SDDA is $\nicefrac{1}{8}$-competitive for arbitrary graphs.
\end{corollary}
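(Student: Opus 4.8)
The plan is to reduce the analysis of SDDA on an arbitrary graph $G$ to the guarantee already proven for DDA on constrained bipartite graphs (Proposition~\ref{prop:factor2}), losing only a constant factor in the reduction. Fix an optimal offline matching $M^*$ on $G$, with total value $\mathrm{OPT} = \sum_{(i,j) \in M^*} v_{i,j}$. The randomized coin flips in SDDA induce a random constrained bipartite graph $\tilde G$: vertex $t$ keeps an incident edge $(i,j)$ with $i<j$ only if $i$ is a buyer, $j$ is a seller, and $j - i \le d$ (the last condition always holds since it holds in $G$). The key observation is that an edge $(i,j) \in M^*$ survives into $\tilde G$ precisely when the earlier endpoint is assigned "buyer" and the later endpoint is assigned "seller," which happens with probability $\nicefrac{1}{4}$, independently across the disjoint edges of $M^*$.

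First I would argue that $M^*$ restricted to $\tilde G$ is a valid matching in the constrained bipartite graph $\tilde G$, so the offline optimum of $\tilde G$, call it $\mathrm{OPT}(\tilde G)$, is at least the total value of the surviving edges of $M^*$. Taking expectations over the coin flips and using linearity, $\E[\mathrm{OPT}(\tilde G)] \ge \sum_{(i,j)\in M^*} \P[(i,j) \text{ survives}] \cdot v_{i,j} = \nicefrac{1}{4}\,\mathrm{OPT}$. Next, by Proposition~\ref{prop:factor2}, conditioned on any realization of the coin flips, DDA run on $\tilde G$ collects at least $\nicefrac{1}{2}\,\mathrm{OPT}(\tilde G)$. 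Taking expectations and chaining the two bounds gives
\begin{equation*}
\E[\text{SDDA}] \;=\; \E\big[\text{DDA}(\tilde G)\big] \;\ge\; \tfrac{1}{2}\,\E[\mathrm{OPT}(\tilde G)] \;\ge\; \tfrac{1}{8}\,\mathrm{OPT},
\end{equation*}
which is the claimed bound.

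The one point that requires genuine care — and which I expect to be the main obstacle — is verifying that DDA's $\nicefrac{1}{2}$-guarantee can legitimately be applied conditionally on the coin flips, i.e. that the randomness of the partition does not interfere with the deterministic primal-dual analysis of Proposition~\ref{prop:factor2}. Because the coin for each vertex is tossed at its arrival time and the resulting constrained bipartite graph is exposed to DDA in the same online order, the realized instance $\tilde G$ is, for every fixed outcome of the coins, a legitimate constrained bipartite input on which DDA behaves exactly as analyzed; the competitive bound holds pathwise, so it survives the outer expectation. A secondary subtlety is making sure the survival events for distinct edges of $M^*$ only use the coins of their four (distinct) endpoints, so the $\nicefrac14$ probability applies to each edge even though we only need linearity, not independence, for the expectation bound. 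Once these points are stated, the corollary follows immediately by composing the two inequalities.
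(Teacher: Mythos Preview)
Your approach is essentially the paper's: show that each edge of the offline optimum survives into the random constrained bipartite graph with probability $\nicefrac14$, then invoke the $\nicefrac12$-guarantee of Proposition~\ref{prop:factor2} pathwise and take expectations. The paper phrases the second step by re-invoking Claim~\ref{cl:conservation} and dual feasibility rather than citing Proposition~\ref{prop:factor2} as a black box, but these are equivalent.

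One correction, however: you have the buyer/seller labels reversed. In SDDA (and in the definition of a constrained bipartite graph), an edge $(i,j)$ with $i<j$ is retained precisely when the \emph{earlier} vertex $i$ is a \emph{seller} and the \emph{later} vertex $j$ is a \emph{buyer}; sellers must arrive before the buyers who bid on them, because DDA's ascending auction is triggered only upon a buyer's arrival and ranges over sellers already present. With your orientation (earlier vertex a buyer, later vertex a seller) the resulting graph would not be a valid constrained bipartite input and Proposition~\ref{prop:factor2} would not apply. After swapping the labels the retention probability is still $\nicefrac14$, so the remainder of your argument goes through unchanged.
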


Observe that for $i < j$, edge $(i,j)$ in the original graph remains in the generated bipartite graph with probability $1/4$ (if $i$ is a seller and $j$ is a buyer). We then use proposition \ref{prop:factor2} to prove that SDDA is $\nicefrac{1}{8}$-competitive.

One source of inefficiency in SDDA is that the decision whether an agent is a seller or a buyer is done independently at random and without taking the graph structure into consideration. We next introduce  the   {\it Postponed Dynamic Deferred Acceptance} algorithm that postpones these decisions for as long as possible to enable a more careful construction of the constrained bipartite graph.

When a vertex $k$ arrives, we add two copies of $k$ to an virtual graph: first a seller $s_k$ and then a buyer $b_k$. Seller $s_k$ initially does not have any edges, and buyer $b_k$ has edges towards any vertex $s_l \in S_k$ with value $v_{l,k}$. Then we run the DDA algorithm with the virtual graph as input. When  a vertex $k$ becomes critical, $s_k$ and $b_k$ successively become critical in the virtual graph, and we compute their matches generated by $DDA$.

Both $s_k$ and $b_k$ can be matched in this process.  If we were to honor both matches, the outcome would correspond to a 2-matching, in which each vertex has degree at most 2. Now observe that because of the structure of the constrained bipartite graph, this 2-matching does not have any cycles; it is just a collection of disjoint paths. We decompose each path into two disjoint matchings and choose each matching with probability  $1/2$.

In order to do that, the algorithm must determine, for each original vertex $k$, whether the virtual buyer $b_k$ or virtual seller $s_k$ will be used in the final matching. We will say that $k$ is a \emph{buyer} or \emph{seller} depending on which copy is used. We say that vertex $k$ is \emph{undetermined} when the algorithm has not yet determined which virtual vertex will be used. When an undetermined vertex becomes critical, the algorithm  flips a fair coin to decide whether to match according to the buyer or seller copy. This decision is then propagated to the next vertex in the 2-matching: if $k$ is a \emph{seller} then the next vertex will be a \emph{buyer} and vice-versa.
That ensures that assignments are correlated and saves a factor $2$ compared to uncorrelated assignments in \emph{SDDA}.

\begin{algorithm}[ht]
\caption{Postponed Dynamic Deferred Acceptance (PDDA)}
\label{alg:QDDA}
\begin{itemize}
\item At any point $t$ during the algorithm, maintain an virtual bipartite graph between a set of sellers $S_t$ and a set of buyers $B_t$. Also maintain a matching $m$, a price $p_s$ for every virtual seller $s \in S_t$, and a marginal profit $q_b$ for every virtual buyer $b \in B_t$.
For each (real) vertex $k$, maintain $k$'s status as either \emph{undetermined}, \emph{buyer} or \emph{seller}.

\item Process each event in the following way:
\begin{enumerate}
  \item \emph{Arrival of a vertex $k$:}
  \begin{enumerate}
  	\item Set $k$'s status to be \emph{undetermined}.
    \item \emph{Add a virtual seller:} $S_t \leftarrow S_{t} \cup \{s_k\}$ and $p_{s_k} \leftarrow 0$.
    \item \emph{Add a virtual buyer:} $B_t \leftarrow B_{t} \cup \{b_k\}$ and $q_{b_k} \leftarrow \max_{s \in S_t} v_{s, b_k} - p_s$.
    If $q_{b_k}>0$, start an ascending auction in DDA, and update $m$, $p$, $q$ accordingly.
  \end{enumerate}
  \item \emph{Vertex $k$ becomes critical:}
  \begin{enumerate}
    \item Let $l$ be such that $m(s_k)= b_l$.
Set $B_t \leftarrow B_t \setminus \{b_k\}$, $S_t \leftarrow S_t \setminus \{s_k\}$, and $B_t \leftarrow B_t \setminus \{b_l\}$. (\emph{match in the virtual graph.}) 
    \item If $k$'s status is \emph{undetermined}, w.p $\nicefrac{1}{2}$ set it to be either \emph{seller} or \emph{buyer}.
    \begin{enumerate}
      \item\label{step:seller}\emph{If $k$ is a seller:} finalize the matching of $k$ to $l$ and collect the reward $v_{k,l}$. If $l \neq \emptyset$, set $l$ to be a buyer.
      \item\label{step:buyer}\emph{If $k$ is a buyer:} If $l \neq \emptyset$, set $l$ to be a seller.
    \end{enumerate}
  \end{enumerate}
\end{enumerate}
\end{itemize}
\end{algorithm}

\begin{theorem}
  \label{th:factor4}
  PDDA is $\nicefrac{1}{4}$-competitive.
\end{theorem}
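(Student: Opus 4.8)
The plan is to chain together three reductions, using Proposition~\ref{prop:factor2} as a black box. Let $G$ be the input graph and $G'$ the virtual constrained bipartite graph that PDDA builds on the fly: for each vertex $k$ it contains a seller $s_k$ and a buyer $b_k$, arriving in the order $s_1,b_1,s_2,b_2,\dots$, with $b_k$ joined to every $s_l$ present at its arrival, i.e.\ to every $s_l$ with $l<k$ and with weight $v_{l,k}$. I would first record that $G'$ is genuinely constrained bipartite (no edge runs from a seller to an earlier-arriving buyer), so Lemma~\ref{lem:mon} and Proposition~\ref{prop:factor2} apply to the DDA run that PDDA performs on it. I would also note that the edge map $(i,j)\mapsto(s_i,b_j)$ (for $i<j$) is injective and sends any matching of $G$ to a matching of $G'$ of the same weight, because each vertex of $G$ is the smaller endpoint of at most one matching edge; hence $\mathrm{OPT}(G')\ge\mathrm{OPT}(G)$.

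The second step is to observe that the construction of $G'$ and the DDA run on it are entirely deterministic --- the coin flips in Algorithm~\ref{alg:QDDA} take place only when a vertex becomes critical and only decide which copy is used in the final output, never altering $S_t,B_t,m,p,q$. So the $2$-matching $M'$ consisting of the finalized pairs $(s_k,m(s_k))$ is a fixed object, and Proposition~\ref{prop:factor2} gives $W(M')\ge\nicefrac{1}{2}\,\mathrm{OPT}(G')\ge\nicefrac{1}{2}\,\mathrm{OPT}(G)$.

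The third step is the rounding. Pushing $M'$ to $G$ via $(s_i,b_j)\mapsto\{i,j\}$ yields a subgraph of maximum degree $2$, and I would argue it has \emph{no cycles}: every edge of $M'$ is of the form $(s_u,b_w)$ with $u<w$, so at any vertex the edge inherited from its seller copy leads to a strictly larger vertex while the edge inherited from its buyer copy comes from a strictly smaller one; following edges therefore produces strictly increasing vertex sequences, so the image of $M'$ is a disjoint union of paths. On a path $u_1<u_2<\dots<u_p$, the edge $\{u_i,u_{i+1}\}$ is the virtual edge $(s_{u_i},b_{u_{i+1}})$ and is finalized by PDDA precisely when $u_i$ ends up labelled a \emph{seller}; the propagation rule (coin-flip $u_1$; thereafter a \emph{seller} forces the next vertex on the path to be a \emph{buyer} and vice versa, and a vertex is always reached by the propagation before it becomes critical) makes the finalized edges alternate along the path, so PDDA's output on the path is either all odd-indexed or all even-indexed edges of the path, each with probability $\nicefrac{1}{2}$, and in particular it is a legal matching of $G$. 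Conditioning on $M'$, the expected weight collected on each path is half its weight, so $\E[\text{value of PDDA}]=\nicefrac{1}{2}\,W(M')\ge\nicefrac{1}{4}\,\mathrm{OPT}(G)$.

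The main obstacle is this last step: one must check carefully that the $2$-matching is acyclic (this genuinely uses both the bipartiteness of $G'$ and the interleaved $s_1,b_1,s_2,b_2,\dots$ arrival order), that the label propagation in Algorithm~\ref{alg:QDDA} always fixes a vertex's status before it turns critical so that no vertex is matched inconsistently, and that the two alternating sub-matchings of each path are selected with probability exactly $\nicefrac{1}{2}$ and independently of the deterministic matching $M'$. Once these structural facts are in place, the competitive ratio follows immediately from $W(M')\ge\nicefrac{1}{2}\,\mathrm{OPT}(G')$ and $\mathrm{OPT}(G')\ge\mathrm{OPT}(G)$.
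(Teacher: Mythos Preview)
Your argument is correct and takes a somewhat more modular route than the paper's own proof. The paper does not invoke Proposition~\ref{prop:factor2} as a black box; instead it works directly with the dual variables produced by the virtual DDA run, showing that $\lambda_k := p_{s_k}^f + q_{b_k}^i$ is a feasible solution of the \eqref{eq:offline:dual} for $G$, and then combines this with the conservation identity of Claim~\ref{cl:conservation} and the equality $\mathcal{A}=\tfrac12\sum_k(p_{s_k}^f+q_{b_k}^f)$ to bound $\mathcal{O}$ by $4\mathcal{A}$. You replace that primal--dual step with two purely combinatorial observations --- the injection $\{i,j\}\mapsto(s_i,b_j)$ gives $\mathrm{OPT}(G')\ge\mathrm{OPT}(G)$, and Proposition~\ref{prop:factor2} applied to $G'$ gives $W(M')\ge\tfrac12\,\mathrm{OPT}(G')$ --- so that no dual variables ever appear explicitly. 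These two packages are equivalent (your $W(M')$ is exactly the paper's $\sum_k p_{s_k}^f+\sum_k q_{b_k}^f$, and your embedding inequality is the content of the paper's dual-feasibility line), but your decomposition makes the reduction to the constrained-bipartite case more transparent. The rounding step --- that the $2$-matching is a union of strictly increasing paths and that the coin-flip-plus-propagation selects one of the two alternating sub-matchings with probability exactly $\tfrac12$ --- is the same structural observation in both proofs.

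One small caveat worth recording when you write this up: Proposition~\ref{prop:factor2} is literally stated for inputs in which every agent has the same sojourn $d$, whereas in $G'$ the copies $s_k$ and $b_k$ arrive at distinct virtual times yet depart simultaneously. The \emph{proof} of Proposition~\ref{prop:factor2}, however, only uses that every seller is still present at the moment each of its buyer neighbours arrives, and this property does hold in $G'$; you should make that remark explicit rather than citing the proposition verbatim.
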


\noindent
The proof of Theorem \ref{th:factor4} is deferred to Section \ref{sec:analysis}. It relies on the following three ingredients.
First, the algorithm collects $$\E\left[\sum_{k \in S} p_{s_k}^f + \sum_{l \in B} q_{b_l}^f\right] = \frac{1}{2} \left(\sum_{k \in [1, T]} p_{s_k}^f + \sum_{l \in [1, T]} q_{b_l}^f\right),$$ where the expectation is taken over the random assignments of \emph{undetermined} vertices to be sellers or buyers.
Second, $(p_{s_k}^f + q_{b_k}^i)_{k \in [1, T]}$ is a feasible dual solution of the offline matching problem. Finally,  similar  to the proof of \ref{prop:factor2}, we use the following equality: $$\sum_{k \in [1, T]} p_{s_k}^f + \sum_{l \in [1, T]} q_{b_l}^f = \sum_{l \in [1, T]} q_{b_l}^i.$$


\section{Analysis}
\label{sec:analysis}
In this section, we prove our three main results. We use $\mathcal{A}$ to denote the expected sum of all match values collected by the algorithm, and $\mathcal{O}$ to denote the value of the offline maximum-weight matching.

\subsection{Proof of Proposition \ref{prop:factor2}}

We  prove that \emph{DDA} (Algorithm 1) obtains a competitive ratio of at least $\nicefrac{1}{2}$ on constrained bipartite graphs.  The proof follows the primal-dual framework.

First, we observe that by complementary slackness, any seller $s$ (buyer $b$) that departs unmatched has a final price $p_s^f = 0$ (final profit margin $q_b^f = 0$). When a seller $s$ is critical and matches to $b$, we have  $v_{s,b} = p_s^f + q_b^f$. Therefore, \emph{DDA} collects a reward of $\mathcal{A} = \sum_{s \in S} p_s^f + \sum_{b \in B} q_b^f$.

Second, let us consider a buyer $b$ and a seller $s \in [b - d, b)$ who has arrived before $b$ but not more than $d$ steps before. Because sellers do not finalize their matching before they are critical, we know that $s \in S_b$. An ascending auction may be triggered at the time of $b$'s arrival, after which we have: $v_{s,b} \leq p_s(b) + q_b(b) \leq p_s^f + q_{b}^i$, where the second inequality follows from the definition that $q_b(b) = q_b^i$ and from the monotonicity of sellers' prices (Lemma \ref{lem:mon}). Thus, $(p^f, q^i)$ is a feasible solution to the offline dual problem.

Finally, we observe that upon the arrival of a new buyer, the ascending auction does not change the sum of prices and margins for vertices who were already present:

\begin{claim}
  \label{cl:monotonicity}
  Let $b$ be a new buyer in the market, and let $p, q$ be the prices and margins before $b$ arrived, and let $S_t$ and $B_t$ be the set of sellers and buyers present before $b$ arrived. Let $p'$, $q'$ be the prices and margins at the end of the ascending auction phase (Step 2(a) in Algorithm 1). Then:
\begin{equation}
\sum_{s \in S_t} p_s + \sum_{b \in B_t} q_b = \sum_{s \in S_t} p'_s + \sum_{b \in B_t} q'_b.
\label{eq:conservation}
\end{equation}
\end{claim}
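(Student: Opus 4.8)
The plan is to track the quantity $\Phi = \sum_{s \in S_t} p_s + \sum_{b' \in B_t} q_{b'}$ through the ascending auction that $b$ triggers, and to show that it returns to its starting value even though the individual prices and margins of the old vertices may move during the auction. It is convenient to reason about the Hungarian-style implementation mentioned right after Algorithm~\ref{alg:DDA} (raising prices simultaneously along tight alternating paths), in the limit $\epsilon \to 0$ that the paper adopts for the analysis; since both implementations terminate with the same maximum-weight matching and the same optimal dual, this entails no loss of generality for \eqref{eq:conservation}. One case is immediate: if $\max_{s \in S_t}(v_{s,b} - p_s) \le 0$ then $b$ never bids, the auction does nothing, $p' = p$ and $q'_{b'} = q_{b'}$ for every $b' \in B_t$, and both sides of \eqref{eq:conservation} coincide. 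So assume $q^i_b > 0$, in which case $b$ will end up matched.

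In the Hungarian implementation the auction is a finite alternation of two kinds of steps. A \emph{dual update} fixes a tight alternating tree rooted at the currently unmatched buyer $b$, with buyer set $B^\star \ni b$ and seller set $S^\star$, in which every seller of $S^\star$ is matched to a distinct buyer of $B^\star \setminus \{b\}$; it then sets $q_{b'} \gets q_{b'} - \delta$ for all $b' \in B^\star$ and $p_s \gets p_s + \delta$ for all $s \in S^\star$, with $\delta > 0$ the largest step keeping the dual feasible. An \emph{augmentation} step re-routes $m$ along a tight alternating path but changes no price and no margin. The process stops after at most one augmentation, or once $q_b$ reaches $0$. Since augmentation steps leave all duals untouched, it suffices to show that every dual update preserves $\Phi$. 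Because the only vertex present after $b$'s arrival that is not already in $S_t \cup B_t$ is $b$ itself, and $b$ is a buyer, we have $S^\star \subseteq S_t$ and $B^\star \setminus \{b\} \subseteq B_t$; hence a dual update changes $\Phi$ by $+|S^\star|\,\delta$ from the sellers and $-|B^\star \setminus \{b\}|\,\delta = -(|B^\star| - 1)\,\delta$ from the old tree buyers. The matching gives a bijection between $S^\star$ and $B^\star \setminus \{b\}$, so $|B^\star| = |S^\star| + 1$ and the net change is $(|S^\star| - |B^\star| + 1)\,\delta = 0$. Summing over all dual updates of the auction yields \eqref{eq:conservation}.

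The point that needs care is the structural claim behind the $\pm 1$ imbalance: in every dual update, each seller of the alternating tree is matched to a tree buyer and $b$ is the unique unmatched vertex of the tree. This holds because a tree containing an unmatched seller would exhibit a tight augmenting path, in which case an augmentation (not a dual update) is performed; and, as the tree is grown from $b$, a buyer other than $b$ enters it only through the matching edge from its already-included matched seller. (As a consistency check, combining this cancellation with Lemma~\ref{lem:mon}(2) and LP strong duality reproduces the sensitivity identity $q^i_b = \nu(S_t \cup B_t \cup \{b\}) - \nu(S_t \cup B_t)$, where $\nu(\cdot)$ denotes the maximum matching weight; but the direct accounting above avoids appealing to duality of the offline LP.)
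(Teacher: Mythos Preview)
Your proposal is correct and takes essentially the same route as the paper. Both arguments pass to the Hungarian implementation and show that each dual update is a wash on $\sum_{S_t} p_s + \sum_{B_t} q_{b'}$ because the sellers in the alternating tree are in bijection (via the current matching) with the tree buyers other than the newcomer; the paper phrases this as ``apart from the initial vertex there are the same number of red and blue vertices,'' which is exactly your $|B^\star| = |S^\star| + 1$.
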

The proof of Claim \ref{cl:monotonicity} is deferred to Appendix \ref{app:missing_proofs}.
By applying this equality iteratively after each arrival, we can relate the initial margins $q^i$ to the final margins $q^f$ and prices $p^f$:
\begin{claim}
  \label{cl:conservation}
  $\sum_{s \in S} p_s^f + \sum_{b \in B} q_b^f = \sum_{b \in B} q_b^i$.
\end{claim}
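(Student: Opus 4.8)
The plan is to run a telescoping argument on a potential function over time, using Claim \ref{cl:monotonicity} to control each ascending-auction phase. I would define, for every time $t$, the potential $\Phi_t = \sum_{s \in S_t} p_s + \sum_{b \in B_t} q_b$, the total of the prices and margins of the agents currently in the market. Before the first arrival $\Phi = 0$, and after the horizon ends (when every agent has departed) $\Phi = 0$ again, so the identity will follow by equating the total increase of $\Phi$ over the run with its total decrease.

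First I would go through the three event types of Algorithm \ref{alg:DDA}. A seller arrival sets $p_s = 0$, so $\Phi$ does not change. When a buyer $b$ arrives, the ascending auction is triggered; by Claim \ref{cl:monotonicity} it leaves $\sum_{s \in S_t} p_s + \sum_{b' \in B_t} q_{b'}$ unchanged, where $S_t, B_t$ are the sets present just before $b$, and the only new summand is $b$'s own margin, which by definition equals $q_b^i$ at the end of the auction; hence $\Phi$ increases by exactly $q_b^i$. When a seller $s$ becomes critical, either $m(s) = b \neq \emptyset$, in which case $s$ and $b$ both leave and $\Phi$ drops by $p_s^f + q_b^f$ (which also equals $v_{s,b}$ by tightness of the dual constraint for the matched pair), or $m(s) = \emptyset$, in which case complementary slackness gives $p_s^f = 0$ and $\Phi$ is unchanged. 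A buyer who becomes critical leaves unmatched with $q_b^f = 0$, again not changing $\Phi$.

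Summing over the whole run, the total increase of $\Phi$ is $\sum_{b \in B} q_b^i$ and the total decrease is $\sum_{s \in S} p_s^f + \sum_{b \in B} q_b^f$, where unmatched sellers contribute $p_s^f = 0$ and unmatched buyers contribute $q_b^f = 0$ to the two sums. Since $\Phi$ begins and ends at $0$, these two totals are equal, which is exactly the claimed identity.

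The step I would be most careful about is the bookkeeping at a buyer arrival: one must make sure the margin of $b$ that enters $\Phi$ right after the auction is precisely the quantity $q_b^i$ used in the proof of Proposition \ref{prop:factor2}, and that Claim \ref{cl:monotonicity} is applied to the agents present strictly before $b$, so that $b$'s margin is not double counted. Everything else is a straightforward telescoping, and the complementary-slackness facts used for the departing agents are exactly the ones already invoked at the beginning of the proof of Proposition \ref{prop:factor2}.
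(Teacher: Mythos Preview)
Your argument is correct and is essentially the same as the paper's: the paper proves by induction on $t$ the invariant
\[
\sum_{s \in \widetilde{S}_t} p_s^f + \sum_{b \in \widetilde{B}_t} q_b^f + \sum_{s \in S_t} p_s(t) + \sum_{b \in B_t} q_b(t) \;=\; \sum_{b \in \widetilde{B}_t \cup B_t} q_b^i,
\]
which is exactly your potential identity ``total decrease of $\Phi$ so far $+$ current $\Phi$ $=$ total increase of $\Phi$ so far'' written out. Both proofs hinge on Claim~\ref{cl:monotonicity} at each buyer arrival and on the complementary-slackness facts $p_s^f=0$, $q_b^f=0$ for unmatched departures; your telescoping presentation is just a slightly cleaner packaging of the same induction.
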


This completes the proof of Proposition \ref{prop:factor2} given that the offline algorithm achieves at most:
$$\mathcal{O} \leq \sum_{s \in S} p_s^f + \sum_{b \in B} q_b^i \leq 2 \mathcal{A}.$$

It remains to prove Claim \ref{cl:conservation}.

\begin{proof}[Proof of Claim \ref{cl:conservation}]
The idea of the proof is to iteratively apply the result of Claim \ref{cl:monotonicity} after any new arrival.
Let $\widetilde{S}_t$ (resp. $\widetilde{B}_t$) be the set of sellers (buyers) who have departed, or already been matched before time $t$. We show by induction over $t \leq T$ that:

\begin{equation}
  \sum_{s \in \widetilde{S}_t} p_s^f + \sum_{b \in \widetilde{B}_t} q_b^f + \sum_{s \in S_t} p_s(t) +  \sum_{b \in B_t} q_b(t) = \sum_{b \in \widetilde{B}_t} q_b^i +  \sum_{b \in  B_t} q_b^i.
  \label{eq:proof:balance}
\end{equation}
This is obvious for $t = 1$. Suppose that it is true for $t \in [1, T-1]$. Note that departures do not affect \eqref{eq:proof:balance}. If the agent arrivint at $t + 1$ is a seller, then for all other sellers $s$, $p_s(t+1) = p_s(t)$ and for all buyers $b$, $q_b(t+1) = q_b(t)$, thus \eqref{eq:proof:balance}, is clearly still satisfied. Suppose that vertex $t + 1$ is a buyer. Using equation \eqref{eq:conservation}, we have:

$$\sum_{s \in S_{t+1}} p_s(t+1) +  \sum_{b \in B_{t+1}} q_b(t+1) = q_{t+1}(t+1) + \sum_{b \in B_t} q_b(t) + \sum_{s \in S_t} p_s(t) = \sum_{b \in  B_{t+1}} q_b^i.$$

Note that at time $T + d$, every vertex has departed. Thus, $\widetilde{S}_{T+d} = S$, $\widetilde{B}_{T+d} = B$ and $S_{T+d} = B_{T+d} = \emptyset$.
This enables us to conclude our induction and the proof for \eqref{eq:proof:balance}.

\end{proof}

\subsection{Proof of Corollary \ref{cor:factor8}}

We  prove  that \emph{SDDA} is $\nicefrac{1}{8}$-competitive for
arbitrary graphs.

The Offline algorithm solves the following maximum-weight matching problem:

  \begin{equation}
  \begin{split}
  \mathcal{O} = \max & \sum_{k < l \in [1, T]} v_{k, l} x_{k, l} \\
  \text{ s.t. } &\sum_{k < l} x_{k,l} + \sum_{k > l} x_{l, k} \leq 1 \\
   & x_{k,l} \in \{0, 1\}. \\
  \end{split}
  \label{eq:offline:primal}
  \tag{Offline Primal}
  \end{equation}

  Suppose that we have assigned each vertex $k \in [1,T]$ to be either a buyer or a seller with probability $1/2$. For $k < l$, let $\widetilde{v}_{k,l} = v_{k,l} \mathds{1}_{k \in S, l \in B}$.
  Consider the \emph{constrained} offline problem obtained by running \eqref{eq:offline:primal} with edge values $\tilde{v}$. Its expected reward is at least $\nicefrac{1}{4}$ of the offline reward $\mathcal{O}$: Let $x^*$ be an optimal solution to equation \eqref{eq:offline:primal}. It is feasible for the \emph{constrained} problem, and yields value equal to $\mathbb{E}\left[\sum_{k < l \in [1, T]} v_{k,l} x^*_{k,l} \mathds{1}_{k \in S, l \in B} \right] = \frac{1}{4} \mathcal{O}$.

  We can conclude using Claim \ref{cl:conservation} along with the fact that $(p_s^f)_{s \in S}, (q_b^i)_{b \in B}$ is a feasible solution to the constrained offline dual problem.
  This yields the \nicefrac{1}{8} competitive ratio. \qedsymbol

\subsection{Proof of Theorem \ref{th:factor4}}

We   prove that the PDDA algorithm achieves a competitive ratio of at least $\nicefrac{1}{4}$ for  arbitrary graphs.
Observe that because of the randomization, we collect in expectation
$$\mathcal{A} = \E\left[\sum_{t \text{ is a seller}} p_{s_t}^f + \sum_{t \text{ is a buyer}} q_{b_l}^f\right] = \frac{1}{2} \sum_{t \in [1, T]} p_{s_t}^f + q_{b_t}^f.$$

Using the convention that for a pair of vertices $k,l$, $v_{k,l} = 0$ when $|k - l| > d$, the dual of the offline matching problem linear programs can be written as:

\begin{equation}
\begin{split}
\min & \sum_{k \in [1,T]} \lambda_k\\
\text{s.t. } & v_{k,l} \leq \lambda_k + \lambda_l \\ 
& \lambda_k \geq 0.
\end{split}
\label{eq:offline:dual}
\tag{Offline Dual}
\end{equation}

It is enough to show that $(p_{s_k}^f + q_{b_k}^i)_{k \in [1, T]}$ is a feasible solution to equation \eqref{eq:offline:dual}.
This is similar to the proof in the simplified setting. Fix $(k, l) \in [1, T]^2$ and assume that $k < l \leq k + d$. When $l$ arrives, we have $s_k \in S(l)$, therefore $v_{k,l} \leq p_{s_k}(l) + q_{b_l}^i$ by dual feasibility during the online matching procedure. Using the Lemma \ref{lem:mon}, we get feasibility for equation \eqref{eq:offline:dual}:
$$ p_{s_k}^f + q_{b_k}^i + p_{s_l}^f + q_{b_l}^i \geq v_{k,l}.$$

We can conclude the factor $4$ using Claim \ref{cl:conservation}.   \qedsymbol


\section{Extensions: Stochastic departures}

We relax the assumption that all vertices depart after exactly $d$ time steps.
In Section \ref{sec:hardness}, we  show that if  departure times are chosen in an adversarial way, then no algorithm can obtain a constant fraction of the offline matching, even when  departure times are known at the time of arrival.

We focus here on the stochastic case, in which  the departure time $d_i$ of vertex $i$ is sampled independently from a distribution $\mathcal{D}$.
We first assume that the realizations $d_i$ are known upfront (Section \ref{sec:stochastic:known}) and next we consider the case, in which $d_i$ is  revealed only when $i$ becomes critical (Section \ref{sec:stochastic:unknown}).

\subsection{Known departure times}
\label{sec:stochastic:known}

We assume here that the for every agent $i$, her departure time  $d_i$ is sampled i.i.d from a distribution $\mathcal{D}$, and that  $d_i$ is revealed to the online algorithm at the time when $i$ arrives.

\begin{claim}
  Suppose that there exists $\alpha \in (0,1)$ such that $\mathcal{D}$ satisfies the property that for all $i < j$,
  $$\P[i + d_i \leq j + d_j | i + d_i \geq j] \geq \alpha.$$
Then \emph{PDDA} is $\nicefrac{\alpha}{4}-competitive$.
\end{claim}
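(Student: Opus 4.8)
The plan is to re-run the argument used for Theorem~\ref{th:factor4}, but replace the deterministic fact ``$s_k \in S(l)$ whenever $k < l \le k+d$'' by a probabilistic statement, and insert a factor $\alpha$ at the point where feasibility of the offline dual is invoked. First I would set up the stochastic offline benchmark: since departures are now random, the offline optimum $\mathcal{O}$ is itself a random variable, and I would work with $\E[\mathcal{O}]$, where the expectation is over the departure realizations $(d_i)$ as well as the internal coin flips of PDDA. The edge set of the realized graph is $\{(k,l): k<l,\ l \le k + d_k\}$, i.e.\ $l$ arrives before $k$ departs. The key observation is that an edge $(k,l)$ with $k<l$ is ``usable'' by DDA on the virtual graph only if $s_k$ is still present when $b_l$ arrives \emph{and} $s_k$ is still present (equivalently, becomes critical no earlier than $b_l$, so that the constrained-bipartite structure is respected) — which is exactly the event $\{k + d_k \le l + d_l\}$ combined with $\{l \le k + d_k\}$, and the hypothesis controls the conditional probability of the former given the latter.

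The main steps, in order: (i) Condition on the departure realizations. On this realization, PDDA still maintains the virtual bipartite graph, the prices $p_{s_k}$ and margins $q_{b_l}$, and Lemma~\ref{lem:mon} and Claim~\ref{cl:conservation} apply verbatim, so $\E_{\text{coins}}[\mathcal{A}] = \frac12 \sum_{k} (p_{s_k}^f + q_{b_k}^f)$ and $\sum_k p_{s_k}^f + \sum_l q_{b_l}^f = \sum_l q_{b_l}^i$, giving $\E_{\text{coins}}[\mathcal{A}] = \frac12 \sum_l q_{b_l}^i$. (ii) Bound $\mathcal{O}$ on this realization. Let $x^*$ be an optimal offline matching for the realized graph. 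For each realized edge $(k,l)$ with $k<l$ used by $x^*$, I want $v_{k,l} \le p_{s_k}^f + q_{b_k}^i + p_{s_l}^f + q_{b_l}^i$; this holds whenever $s_k \in S(l)$ at the arrival of $b_l$, which holds on the event $E_{k,l} := \{l \le k+d_k\} \cap \{k+d_k \le l+d_l\}$. (iii) Take expectations over the departures. For a fixed pair $k<l$, $\P[E_{k,l}] = \P[k+d_k \le l+d_l \mid k+d_k \ge l]\cdot \P[k+d_k \ge l] \ge \alpha\,\P[l \le k+d_k]$, and since the edge $(k,l)$ is present in the realized graph precisely when $l \le k+d_k$, one gets $\P[E_{k,l} \mid (k,l) \text{ present}] \ge \alpha$. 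Hence $\E[\mathcal{O}] \le \frac{1}{\alpha}\,\E\big[\sum_{(k,l)\in x^*} (p_{s_k}^f + q_{b_k}^i + p_{s_l}^f + q_{b_l}^i)\big] \le \frac{1}{\alpha}\E\big[\sum_k (p_{s_k}^f + q_{b_k}^i)\big]$, where the last step uses that $x^*$ is a matching so each vertex appears at most once. (iv) Combine: $\sum_k p_{s_k}^f + \sum_k q_{b_k}^i \le \sum_k q_{b_k}^i + \sum_l q_{b_l}^i = 2\sum_l q_{b_l}^i = 4\,\E_{\text{coins}}[\mathcal{A}]$ (using Claim~\ref{cl:conservation} and $p^f \le q^i$ termwise, or directly the conservation identity), so $\E[\mathcal{O}] \le \frac{4}{\alpha}\E[\mathcal{A}]$, i.e.\ PDDA is $\frac{\alpha}{4}$-competitive.

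The step I expect to be the main obstacle — and the one needing the most care — is (iii), correctly identifying which event must occur for the edge $(k,l)$ to be ``available'' to the virtual bipartite procedure and checking that the hypothesis is exactly the right conditioning. There are two things $s_k$ must satisfy when $b_l$ arrives: $s_k$ has not yet departed (this is $l \le k + d_k$, which is also exactly the condition for $(k,l)$ to be an edge of the realized offline graph — so it is automatically satisfied whenever the edge is used by $x^*$), and $s_k$ will not become critical strictly before $b_l$ would force a match in a way that breaks the monotonicity/constrained structure — but since DDA only finalizes $s_k$'s match when $s_k$ is critical and we need $p_{s_k}(l) \le p_{s_k}^f$, the relevant requirement is that $s_k$ is critical no earlier than $l$, which combined with the first condition is subsumed once we also ask $k + d_k \le l + d_l$ so that, in the virtual ordering, the constrained-bipartite invariant (seller departs after the buyer it could match) is preserved. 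The hypothesis $\P[i+d_i \le j+d_j \mid i+d_i \ge j] \ge \alpha$ is precisely $\P[E_{k,l} \mid (k,l) \text{ present}] \ge \alpha$, so modulo a careful statement of the invariant the argument goes through; a secondary subtlety is making sure the conservation identity of Claim~\ref{cl:conservation} still holds for each departure realization, which it does since that claim never used that all $d_i$ are equal, only that every vertex eventually departs.
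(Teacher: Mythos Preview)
Your overall intuition is right, but the execution has a real gap, and the paper's route is both simpler and avoids it.

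\textbf{What the paper does.} The paper does not try to thread the $\alpha$ through the primal--dual analysis at all. It simply \emph{preprocesses} the realized graph: since the $d_i$ are known on arrival, it zeroes out every edge $(i,j)$ with $i<j$ and $i+d_i>j+d_j$. On the resulting graph the constrained-bipartite invariant holds exactly (sellers always become critical before their adjacent buyers), so Theorem~\ref{th:factor4} applies as a black box and PDDA is $\tfrac14$-competitive against the \emph{modified} offline optimum $\mathcal{O}'$. The $\alpha$ enters only in the one-line comparison $\E[\mathcal{O}'] \ge \alpha\,\E[\mathcal{O}]$, obtained by observing that the original optimal matching $x^*$ is feasible for the modified graph and each of its edges survives with conditional probability at least $\alpha$.

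\textbf{Where your plan breaks.} Two places. First, you never commit to which graph PDDA runs on. If it is the unpruned graph, then a buyer copy $b_l$ may be tentatively matched to $s_k$ with $k<l$ but $l+d_l<k+d_k$; when $l$ becomes critical first, $b_l$ is removed while $s_k$ still has positive price but no match, and the monotonicity/complementary-slackness invariants you need for Lemma~\ref{lem:mon} and Claim~\ref{cl:conservation} are in jeopardy (this is exactly the failure mode the paper flags in Section~\ref{sec:stochastic:unknown}). Second, and more seriously, the ``Hence'' in step~(iii) is not valid. You have established $v_{k,l}\le D_{k,l}$ only on the event $E_{k,l}$, and $\P[E_{k,l}\mid (k,l)\text{ present}]\ge\alpha$. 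But $x^*$, the indicator $\mathbb{1}_{E_{k,l}}$, and the dual quantities $D_{k,l}=p_{s_k}^f+q_{b_k}^i+p_{s_l}^f+q_{b_l}^i$ all depend on the same departure realizations, so you cannot pass from the conditional probability bound to $\E[\mathcal{O}]\le\frac{1}{\alpha}\E\big[\sum_{(k,l)\in x^*} D_{k,l}\big]$. On the complementary event $E_{k,l}^c$ you have no control over $v_{k,l}$ in terms of the duals, and that part of $\mathcal{O}$ is simply unaccounted for.

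The clean fix is exactly the paper's: prune first, so that every surviving edge satisfies the dual inequality deterministically, apply the $\tfrac14$ bound on the pruned instance, and only then compare pruned to unpruned optima via the $\alpha$ hypothesis.
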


Observe in particular that if $\mathcal{D}$ is constant with value $d$, we recover our previous result. Furthermore, if $\mathcal{D}$ has a non-decreasing hazard rate, then this property is verified with $\alpha = 1/2$.

\begin{proof}
The main idea is to pre-process the graph by removing edges for which the two endpoints do not arrive and depart in the same order.
Consider a modified graph, where we set edge value $v_{i,j}$ to $0$ when $i + d_i > j + d_j$. Each non-zero edge is kept with probability at least $\alpha$.

Note that the offline optimal matching $x^*$ on the initial graph is a feasible matching on the modified graph. Thus, the offline matching on the modified graph collects a reward of at least $\sum_{i < j \in [1, T]} x^*_{i,j} v_{i,j} \mathbb{I}_{i + d_i \leq j + d_j} \geq \alpha \mathcal{O}$.

Observe that the PDDA algorithm only requires that when a buyer becomes critical, any compatible seller has already departed. This is the case in our modified graph, which yields our factor $\nicefrac{\alpha}{4}$.
\end{proof}

\subsection{Unknown departure times}
\label{sec:stochastic:unknown}
We assume now that the online algorithm only learns $d_i$ once $i$ becomes critical.
The main difficulty is that a buyer $b$ may become critical before the seller $s = m(b)$ that she is tentatively matched to. Because we want the seller to wait until she becomes critical, we cannot conduct the match. However, the departure of $b$ may cause the price $p_s$ to decrease, which violates the monotonicity property in Lemma \ref{lem:mon}.

We  modify the PDDA algorithm in the following way: we set the buyer copy $b_k$ to never become critical in the auxiliary graph. Because buyer vertices can only match to sellers who arrive before them, there will eventually be a time when $b_k$ does not have any edge left in the auxiliary graph.

\begin{proposition}
  \label{prop:single_bid}
  Suppose that there exists $\alpha \in (0,1)$ such that $\mathcal{D}$ satisfies the property that for all $i < j$,
  $$\P[i + d_i \leq j + d_j | i + d_i \geq j] \geq \alpha.$$
  Then PDDA is $\nicefrac{\alpha}{4}$-competitive.
\end{proposition}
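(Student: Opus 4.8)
The goal is to show that the modified PDDA (in which buyer copies $b_k$ never become critical, and are simply removed from the auxiliary graph once they have no remaining edges) is still $\nicefrac{\alpha}{4}$-competitive when departure times are i.i.d.\ from $\mathcal{D}$ and revealed only at criticality. The strategy mirrors the proof of Theorem \ref{th:factor4}: pass to a modified graph where $v_{i,j}$ is zeroed out whenever $i+d_i > j+d_j$ (so edges respect both arrival and departure order), losing only a factor $\alpha$ in the offline benchmark exactly as in the proof of the previous Claim; then run the primal–dual accounting on this modified graph. The three ingredients to re-establish are: (i) the algorithm collects, in expectation over the seller/buyer coin flips, $\frac12\sum_k (p_{s_k}^f + q_{b_k}^f)$; (ii) $(p_{s_k}^f + q_{b_k}^i)_k$ is feasible for \eqref{eq:offline:dual} on the modified graph; and (iii) the conservation identity $\sum_k p_{s_k}^f + \sum_l q_{b_l}^f = \sum_l q_{b_l}^i$ still holds.

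First I would re-examine the monotonicity lemma in this new regime. The whole reason for the modification is that a buyer $b_k$ departing before its tentatively matched seller would let $p_{s}$ drop; by decreeing that $b_k$ never becomes critical and only leaves when it has no edges, the auxiliary graph only ever loses sellers (at their criticality) and buyers-with-no-edges. I would argue that on the modified graph a buyer $b_k$ has an edge to $s_l$ only if $l \le k$ and $l + d_l \le$ (time $b_k$'s edges expire), and more to the point that removing an \emph{edgeless} buyer changes no price and no margin, so Lemma \ref{lem:mon}'s conclusion — sellers' prices nondecreasing, buyers' margins nonincreasing — survives verbatim. Likewise Claim \ref{cl:monotonicity} (conservation across one ascending auction) and hence Claim \ref{cl:conservation} go through unchanged, because their proofs only use bipartiteness and the fact that departures don't touch the balance equation \eqref{eq:proof:balance}; here too a seller departure finalizes a match contributing $p_s^f + q_{b}^f$, and an edgeless-buyer departure contributes $q_b^f = 0$ (it has no profitable edge left, so its margin is $0$ by complementary slackness). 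For (ii): fix $k<l$ with $v_{k,l}>0$ in the modified graph, i.e.\ $k+d_k \le l+d_l$ and $l \le k+d_k$; then when $b_l$ arrives, $s_k$ is still in $S(l)$ (it's critical no earlier than $l$), so $v_{k,l} \le p_{s_k}(l) + q_{b_l}^i \le p_{s_k}^f + q_{b_l}^i \le p_{s_k}^f + q_{b_k}^i + p_{s_l}^f + q_{b_l}^i$ using monotonicity and nonnegativity of $p_{s_l}^f, q_{b_k}^i$.

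The main obstacle I anticipate is (i), the claim that each original vertex is, after the coin-flip propagation, a seller or a buyer with the right marginal probability and that the path-decomposition argument still applies. In the basic PDDA the 2-matching on the auxiliary constrained bipartite graph is a union of disjoint paths, and alternating along a path lets one pick one of two matchings each with probability $\frac12$; here buyers never get ``matched critical,'' so I need to check that the structure the algorithm actually realizes — $m(s_k) = b_l$ revealed when $s_k$ becomes critical — still forms disjoint paths, that each vertex's status is eventually determined, and that conditional on the realized graph the final matching is one of the two alternating matchings of each path uniformly. The subtlety is that the ``undetermined'' flag now gets resolved at seller-criticality times in an order that depends on the random departures, but since the propagation rule (a seller forces its partner to be a buyer and vice versa) is deterministic given the path structure, an induction along each path shows every vertex on it inherits a consistent alternating label and the first flip on the path is fair — giving each edge of the path probability $\frac12$ of being selected, hence each endpoint probability $\frac12$ of playing its seller vs.\ buyer role. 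Once this is in hand, combining (i), (ii), (iii) with the $\alpha$-loss from the graph modification gives $\mathcal{O} \le \tfrac1\alpha\big(\sum_k p_{s_k}^f + q_{b_k}^i\big) = \tfrac1\alpha\big(\sum_k q_{b_k}^i\big) = \tfrac2\alpha \mathcal{A}$, wait — more carefully, $\mathcal{O}_{\mathrm{mod}} \le \sum_k (p_{s_k}^f + q_{b_k}^i) \le 2\sum_k q_{b_k}^i = 2\cdot 2\mathcal{A} = 4\mathcal{A}$ and $\alpha\mathcal{O} \le \mathcal{O}_{\mathrm{mod}}$, so $\mathcal{A} \ge \tfrac{\alpha}{4}\mathcal{O}$, as claimed.
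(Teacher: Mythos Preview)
Your plan has a genuine gap at precisely the step you flag as the ``main obstacle,'' though not for the reason you anticipate. The graph modification (zeroing out $v_{i,j}$ whenever $i+d_i>j+d_j$) is borrowed from the known-departures claim, where the algorithm can actually construct that graph and run DDA on it. Here departure times are revealed only at criticality, so the algorithm cannot prune those edges in advance: it must run the virtual DDA on the \emph{full} auxiliary graph, and the dual variables $p^f,q^i,q^f$ you manipulate are computed there. Consequently your ingredient~(i) --- that the algorithm collects $\tfrac12\sum_k(p_{s_k}^f+q_{b_k}^f)$ --- is false: that sum equals half the \emph{virtual} matching value on the full graph, but the algorithm only realizes a virtual match $(s_k,b_l)$ when $l$ is still present at time $k+d_k$, i.e.\ when $l+d_l\ge k+d_k$. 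Your final chain $\alpha\,\mathcal{O}\le\mathcal{O}_{\mathrm{mod}}\le 2\sum_k q_{b_k}^i = 4\mathcal{A}$ therefore breaks at the last equality, which rests exactly on~(i).

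The paper's route is shorter and does not pass through a modified graph. It runs the full-graph virtual DDA; the Theorem~\ref{th:factor4} primal--dual accounting (which survives the modification because buyer copies never leave with positive margin, so monotonicity and conservation hold) gives $\mathcal{O}\le 2\sum_k(p_{s_k}^f+q_{b_k}^f)$, twice the virtual matching value. Then when $k$ becomes critical and is determined to be a seller, the algorithm \emph{attempts} to match $k$ to $l$ where $b_l=m(s_k)$; since $k<l\le k+d_k$, the hypothesis on $\mathcal{D}$ gives $\P[\,l+d_l\ge k+d_k\mid k+d_k\ge l\,]\ge\alpha$, so the attempt succeeds with probability at least~$\alpha$. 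Combined with the $\tfrac12$ from the seller/buyer coin, each virtual edge is collected with probability at least $\alpha/2$, and $\mathcal{A}\ge\tfrac{\alpha}{4}\mathcal{O}$ follows. In other words, the $\alpha$ loss is absorbed at the \emph{realization} step, not by thinning the offline benchmark.
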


\begin{proof}
When a vertex $k$ becomes critical in the original graph, if $\rho(k) = S$, we try to match vertex $k$ to  vertex $l$ such that $m(s_k) = b_l$. With probability at least $\alpha$, vertex $l$ is still present in the original graph.
\end{proof}

\begin{corollary}
  PDDA is $\nicefrac{1}{8}$-competitive when $\mathcal{D}$ has a non-decreasing hazard rate.
\end{corollary}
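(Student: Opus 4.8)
The plan is to obtain the corollary directly from Proposition~\ref{prop:single_bid}: it is enough to check that a distribution $\mathcal{D}$ with non-decreasing hazard rate satisfies the hypothesis of that proposition with $\alpha = \nicefrac{1}{2}$, since then the $\nicefrac{\alpha}{4}$-competitiveness guarantee becomes exactly $\nicefrac{1}{8}$. So the work reduces to verifying the single inequality $\P[i + d_i \le j + d_j \mid i + d_i \ge j] \ge \nicefrac{1}{2}$ for all $i < j$; no new primal--dual argument is needed.

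To verify this, fix $i < j$ and set $k = j - i \ge 1$, and let $X = d_i$, $Y = d_j$ be i.i.d.\ copies of $\mathcal{D}$. The conditioning event $\{i + d_i \ge j\}$ is $\{X \ge k\}$, and the target event $\{i + d_i \le j + d_j\}$ is $\{X - k \le Y\}$, so I must show that the conditional residual lifetime $X' := (X - k \mid X \ge k)$ and the fresh, independent copy $Y$ satisfy $\P[X' \le Y] \ge \nicefrac{1}{2}$. Two ingredients give this. First, a non-decreasing hazard rate makes $\mathcal{D}$ ``new better than used'': using $\P[X \ge m] = \prod_{\ell < m}\bigl(1 - h(\ell)\bigr)$ with the discrete hazard rate $h(\ell) = \P[X = \ell \mid X \ge \ell]$ non-decreasing, one gets $\P[X' \ge t] = \P[X \ge k+t]/\P[X \ge k] \le \P[X \ge t] = \P[Y \ge t]$ for every $t$, i.e.\ $X' \leq_{\mathrm{st}} Y$. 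Second, a short symmetry step: writing $F$ for the CDF of $Y$ and $\tilde Y$ for an independent copy of $Y$, pointwise domination $F_{X'} \ge F$ applied at the random point $Y$ gives $\P[X' \le Y] = \E[F_{X'}(Y)] \ge \E[F(Y)] = \P[\tilde Y \le Y]$, and $\P[\tilde Y \le Y] \ge \nicefrac{1}{2}$ since $\P[\tilde Y \le Y] + \P[Y \le \tilde Y] = 1 + \P[\tilde Y = Y] \ge 1$. Chaining the two inequalities yields $\P[X' \le Y] \ge \nicefrac{1}{2}$, which is precisely $\alpha = \nicefrac{1}{2}$, and Proposition~\ref{prop:single_bid} then delivers the $\nicefrac{1}{8}$ bound.

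The steps that I would spell out with care — and which constitute essentially all of the (modest) difficulty — are the discrete-time bookkeeping: making sure the direction of the conditioning ($\ge$ versus $>$) and the handling of ties at equal departure times match the exact convention of Proposition~\ref{prop:single_bid}, and presenting the discrete hazard-rate/NBU computation $\P[X \ge k+t]/\P[X \ge k] \le \P[X \ge t]$ cleanly (it is just a telescoping product of terms $1 - h(\ell)$ together with monotonicity of $h$). Everything substantive about the algorithm and the competitive ratio has already been established in Proposition~\ref{prop:single_bid}, so this corollary is a one-line instantiation once the $\alpha = \nicefrac{1}{2}$ fact is in hand.
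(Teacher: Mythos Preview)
Your proposal is correct and matches the paper's approach exactly: the corollary is obtained by instantiating Proposition~\ref{prop:single_bid} with $\alpha=\nicefrac{1}{2}$, which the paper asserts (without proof) holds whenever $\mathcal{D}$ has non-decreasing hazard rate. Your NBU/stochastic-dominance argument simply fills in that assertion, and the discrete bookkeeping you flag is handled correctly.
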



\section{Examples}
\label{sec:hardness}

We will present six examples. The first one shows an upper bound of $\nicefrac{1}{2}$ for the online matching on arbitrary graphs. The second and third show that, even in the case of a bipartite \emph{constrained} graph, randomized and deterministic algorithms cannot obtain competitive ratios higher than $0.8$ and $0.618$ respectively. The fourth one shows that our analyses of Proposition \ref{prop:factor2} and Theorem \ref{th:factor4} are tight. The last two examples show that no algorithm is constant-competitive in the case where we let departures be chosen by an adversary, or if the departures are stochastic and the algorithm does not know when vertices become critical.

\subsection*{Upper Bounds}
\begin{claim}
No deterministic algorithm is constant-competitive, and no randomized algorithm is more than $\nicefrac{1}{2}$-competitive.
\end{claim}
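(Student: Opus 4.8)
The plan is to prove two separate statements with two small families of instances. For the deterministic lower bound, I would use (a generalization of) Figure \ref{fig:hard:basic}: set $d=1$, let agent $1$ and agent $2$ arrive with $v_{1,2}=1$, and then let agent $3$ arrive only if the deterministic algorithm has \emph{not} already matched $1$ and $2$. Since the algorithm is deterministic and must decide about the edge $(1,2)$ when agent $1$ becomes critical (before seeing agent $3$), the adversary knows what it will do. If the algorithm matches $1$ and $2$, set $v_{2,3}=0$ (or simply stop), so the algorithm gets $1$ while the offline optimum is also $1$ — so instead we amplify: if the algorithm matches $1$ and $2$, continue by having agent $3$ arrive with $v_{2,3}=0$ and stop, giving ratio $1$, which is not yet a contradiction. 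The real construction is the recursive/scaling one: if the algorithm matches $(1,2)$, the adversary stops and the ratio is $1$; if it does \emph{not}, the adversary sends agent $3$ with $v_{2,3}=M$ for $M$ large, so the algorithm collected $0$ on $\{1,2\}$ but could have collected... wait — here the algorithm can still match $2$ and $3$. So the correct gadget is: agent $2$ is the one who becomes critical first is impossible with these arrival times; rather, let agent $1$ become critical at time $2$. If the algorithm matches $(1,2)$ now, then agent $3$ arrives with a huge value $v_{2,3}=M$ but $2$ is gone, so the algorithm has $1$ versus offline $M$. If the algorithm does \emph{not} match $(1,2)$, then no agent $3$ arrives, so $1$ leaves unmatched and the algorithm has $0$ versus offline $1$. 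Either way the ratio is $0$ as $M\to\infty$ (or $0$ outright in the second branch), so no deterministic algorithm is constant-competitive.

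For the randomized $\nicefrac{1}{2}$ bound I would invoke Yao's principle: exhibit a distribution over inputs on which every deterministic algorithm has expected value at most $\nicefrac{1}{2}$ of the offline optimum, and use the fact that the competitive ratio of the best randomized algorithm against an oblivious adversary is bounded by $\inf$ over distributions of this quantity. The distribution is the natural one on the gadget above: with probability $\nicefrac{1}{2}$ the instance is just $\{1,2\}$ with $v_{1,2}=1$ (offline optimum $1$), and with probability $\nicefrac{1}{2}$ it is $\{1,2,3\}$ with $v_{1,2}=1,\ v_{2,3}=1$ (offline optimum $1$, achieved by matching $(2,3)$ and leaving $1$ unmatched, or by $(1,2)$ — both give $1$; to force a real tradeoff one wants offline optimum $=1$ in the three-vertex case too, so keep both edge values equal to $1$). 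A deterministic algorithm must commit to match-or-not on $(1,2)$ at time $2$ before learning which instance it is in. If it matches, it gets $1$ on the first instance and $1$ on the second (via $(1,2)$), expected $1$ — that is too good, so I need the three-vertex instance's optimum to be strictly larger, e.g. $v_{2,3}=2$. Recomputing: instance A (prob $\nicefrac12$) is $\{1,2\}$, $v_{1,2}=1$, OPT $=1$; instance B (prob $\nicefrac12$) is $\{1,2,3\}$, $v_{1,2}=1$, $v_{2,3}=2$, OPT $=2$. If the deterministic algorithm matches $(1,2)$ at time $2$: expected value $\tfrac12\cdot 1 + \tfrac12 \cdot 1 = 1$; expected OPT $=\tfrac12\cdot1+\tfrac12\cdot2=\tfrac32$; ratio $\tfrac23$. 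If it does not match $(1,2)$: on A it gets $0$, on B it can still match $(2,3)$ for $2$; expected $\tfrac12\cdot0+\tfrac12\cdot2=1$; same ratio $\tfrac23$. That gives $\nicefrac{2}{3}$, not $\nicefrac12$, so this simple two-point distribution is not tight; I would instead iterate the gadget (a chain of length $\Theta(\log(1/\varepsilon))$ of such decisions, each halving the "surviving" probability mass, with geometrically growing edge weights) so that along the unique realized path the algorithm is always one step behind, driving the ratio down to $\nicefrac12$ in the limit.

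The key steps, in order: (i) describe the single deterministic gadget and the adversary's two responses, concluding the deterministic part immediately; (ii) state Yao's minimax principle for online algorithms against oblivious adversaries; (iii) build the iterated/recursive instance — a path $1,2,3,\dots$ with $d=1$ where vertex $j+1$ arrives only "on the branch" where earlier matches were declined, and edge weights scaled so that at every prefix the offline optimum is roughly twice what any algorithm can have locked in; (iv) compute the expected algorithm value and expected offline value under the induced distribution and take the limit to get $\nicefrac12$. The main obstacle is step (iii)–(iv): getting the probabilities and the geometric weight schedule to line up so the ratio converges exactly to $\nicefrac12$ rather than some larger constant like $\nicefrac23$ — one must be careful that the offline optimum on the $n$-vertex chain is a single maximum-weight matching (alternating edges), that the online algorithm's irrevocable decision at each critical vertex is genuinely made before the next weight is revealed, and that the tail contributions are negligible. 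The deterministic part, by contrast, is essentially immediate from Figure \ref{fig:hard:basic}.
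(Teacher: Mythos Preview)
Your deterministic argument, once you arrive at it, is correct and is precisely what the paper intends from Figure~\ref{fig:hard:basic}: if the algorithm matches $(1,2)$ the adversary sets $y=M$ and the ratio is $1/M$; if it does not, the adversary sets $y=0$ and the ratio is $0$.

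For the randomized bound you are taking an unnecessary detour. The same three-vertex gadget already yields $\nicefrac{1}{2}$ directly, with no Yao principle, no input distribution, and no chain. Any randomized algorithm, at the moment agent~$1$ becomes critical, has seen only $v_{1,2}=1$; hence it matches $(1,2)$ with some fixed probability $p$ that cannot depend on $y$. On the instance $y=0$ its expected value is $p$ against $\mathrm{OPT}=1$, so the ratio is $p$. On the instance $y=M$ its expected value is $p+(1-p)M$ against $\mathrm{OPT}=M$, so the ratio is $p/M+(1-p)\to 1-p$. The competitive ratio is therefore at most $\min(p,1-p)\le \nicefrac{1}{2}$.

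Your Yao computation gave $\nicefrac{2}{3}$ rather than $\nicefrac{1}{2}$ for two reasons: you compared $\E[\mathrm{ALG}]$ to $\E[\mathrm{OPT}]$ instead of bounding the worst-case ratio $\inf_I \E[\mathrm{ALG}(I)]/\mathrm{OPT}(I)$, and you kept the second edge bounded at $2$. Sending $y\to\infty$ in the single gadget is exactly what makes the bound tight; the iterated chain with geometrically growing weights that you propose is an attempt to rebuild, across many steps, the unbounded weight that one step with large $y$ already supplies. It may be possible to push that construction through, but it is not needed, and the details you flag as the ``main obstacle'' disappear once you drop Yao and argue directly on the two instances.
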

\begin{proof}
	Observe that in Figure \ref{fig:hard:basic},
\end{proof}

\begin{claim}
  \label{cl:ex:bipartite_constrained}
  When the input is a bipartite \emph{constrained} graph:
  \begin{itemize}
    \item[-] No deterministic algorithm can obtain a competitive ratio above $\frac{\sqrt{5} - 1}{2} \approx 0.618$.
    \item[-] No randomized algorithm can obtain a competitive ratio above $\frac{4}{5}$.
  \end{itemize}
\end{claim}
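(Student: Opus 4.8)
To prove both upper bounds I would construct explicit families of constrained bipartite instances on which every (deterministic, resp. randomized) algorithm is forced to leave a constant fraction of the optimum on the table. The unavoidable source of loss is already visible in Figure \ref{fig:hard:basic}: a seller $s$ becomes critical and must be matched (or released) \emph{before} the algorithm learns whether a later, higher-value buyer will arrive who could have used $s$. I would amplify this single gadget into a longer alternating ``ladder'': sellers $s_1,\dots,s_n$ and buyers $b_1,\dots,b_n$ arriving interleaved as $s_1,b_1,s_2,b_2,\dots$, with $|i-j|\le d$ chosen so that $b_i$ is compatible with $s_i$ and with $s_{i+1}$ (the constrained condition $b>s$ is automatically satisfied), and with geometrically increasing values $v_{s_i,b_i}=1$, $v_{s_{i+1},b_i}=c$ for a tuning constant $c>1$. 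When $s_i$ becomes critical, $b_{i+1}$ has not yet arrived, so the algorithm must commit on the edge $(s_i,b_i)$ with no information about whether the chain continues; the offline optimum can always take the ``other'' perfect matching (all the $c$-edges) if the chain is long.

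\textbf{Deterministic bound.} For the deterministic case I would run an adaptive adversary argument. Fix the tuning constant and present the ladder one gadget at a time. Each time a seller $s_i$ is about to become critical, the deterministic algorithm's choice (match $s_i$ to its current tentative partner $b_i$, or wait — but waiting past criticality is not allowed, so it must match $b_i$ or release $s_i$) is known to the adversary; the adversary then decides whether to extend the ladder. A clean way to package this is to let the ratio on a ladder of length $n$ that the algorithm ``walks into'' satisfy a recursion whose fixed point is the golden-ratio-type quantity $\frac{\sqrt5-1}{2}$: if the algorithm grabs the cheap edge it forfeits the possibility of the expensive one downstream, and balancing the two losing continuations gives $x = \frac{1}{1+x}$, i.e. $x^2+x-1=0$. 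I would choose the edge weights so that this recursion is exactly realized and conclude that no deterministic algorithm beats $\frac{\sqrt5-1}{2}$.

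\textbf{Randomized bound.} For the $\nicefrac45$ bound I would invoke Yao's principle: exhibit a single distribution over constrained bipartite inputs such that every deterministic algorithm has expected value at most $\tfrac45$ of the (instance-dependent) optimum. The natural distribution is: build a ladder of random length — with each successive gadget the adversary flips a coin to decide whether to stop. Against such a distribution a deterministic algorithm, at each seller's criticality, faces a fixed posterior probability that the ladder continues, and its optimal policy is a threshold rule; optimizing the weights $v_{s_i,b_i}$ versus $v_{s_{i+1},b_i}$ and the continuation probability so that the best deterministic expected reward is exactly $\tfrac45\,\E[\mathcal O]$ gives the bound. A two-stage version (a single seller who is compatible with one ``sure'' buyer and, with probability $p$, a second higher-value buyer) may already suffice to get $\tfrac45$ with the right choice of $p$ and the two weights, which would make the calculation short and self-contained.

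\textbf{Main obstacle.} The routine part is the primal estimates; the delicate part is choosing the gadget parameters (the geometric ratio $c$, the continuation probabilities, and $d$) so that (i) the graph really is \emph{constrained} bipartite — i.e. every buyer is compatible only with sellers who arrived before her and within $d$ steps, which constrains how long and how ``spread out'' the ladder can be for a fixed $d$ — and (ii) the induced optimization for the algorithm has its value pinned down \emph{exactly} at $\frac{\sqrt5-1}{2}$ and $\frac45$ rather than merely bounded near them. I expect (i) to be the real nuisance: one must either take $d$ growing with $n$, or recycle vertices cleverly, to keep all the needed edges present simultaneously while killing the edges that would violate the constraint. Getting a single clean picture (analogous to Figure \ref{fig:hard:basic}) that realizes both bounds is where most of the work lies.
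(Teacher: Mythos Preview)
Your ladder construction has a fatal orientation error. You interleave arrivals as $s_1,b_1,s_2,b_2,\dots$ and then give $b_i$ an edge to $s_{i+1}$; but $s_{i+1}$ arrives at time $2i+1$, \emph{after} $b_i$ at time $2i$, so this edge is forbidden in a constrained bipartite graph (buyers may only see sellers who arrived earlier). Your parenthetical ``the constrained condition $b>s$ is automatically satisfied'' is false for precisely the heavy edges that carry your argument. Flipping the direction (letting $b_i$ see $s_{i-1}$ instead) breaks the intended tension: when $s_i$ goes critical you would already know all buyers who can ever bid on her, so there is no hidden future to exploit.

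More importantly, the ladder and the Yao machinery are unnecessary. The paper obtains both bounds from a \emph{single four-vertex gadget}: sellers $1,2$, buyers $3,4$, $d=2$, edges $v_{1,3}=a$, $v_{2,3}=1$, and $v_{2,4}=x\in\{0,1\}$ chosen by the adversary \emph{after} seller~$1$ goes critical (which happens before $4$ arrives). For the deterministic bound take $a=\frac{\sqrt5-1}{2}$: if the algorithm grabs $(1,3)$ the adversary sets $x=0$ and the ratio is $a/1$; if it releases seller~$1$ the adversary sets $x=1$ and the ratio is $1/(1+a)$. These coincide exactly at $a=\frac{\sqrt5-1}{2}$, which is your fixed-point equation $a=1/(1+a)$ realized in one shot. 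For the randomized bound take $a=\tfrac12$ and let $p$ be the probability the algorithm matches $(1,3)$; the adversary picks $x$ to minimize, and $\max_p\min\bigl(1-\tfrac p2,\ \tfrac{2+p}{3}\bigr)=\tfrac45$ at $p=\tfrac25$. Your ``two-stage version with a single seller'' does not work either: with one seller and two later buyers, by the time the seller is critical either both buyers are visible (no uncertainty) or only one is (trivial); two sellers are what create the commitment problem.
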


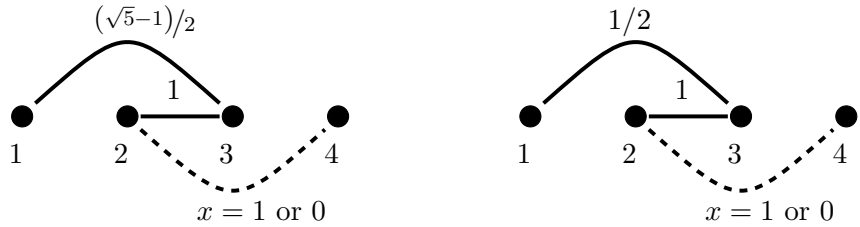
\begin{figure}[!ht]
  \centering
\begin{tikzpicture}[pre/.style={<-,shorten <=1.5pt,>=stealth,thick}, post/.style={->,shorten >=1pt,>=stealth,thick}, scale=0.7]
\tikzstyle{every node}=[draw,shape=rectangle,minimum size=5mm, inner sep=0];
\tikzstyle{edge} = [draw,thick,-]
\tikzstyle{every node}=[shape=circle,minimum size=8mm, inner sep=0];

\draw [fill](-12,0) circle [radius=0.2];
\node [right] at (-12.7,-0.7) {$1$};
\draw [fill](-10,0) circle [radius=0.2];
\node [right] at (-10.7,-0.7) {$2$};
\draw [fill](-8,0) circle [radius=0.2];
\node [right] at (-8.7,-0.7) {$3$};
\draw [fill](-6,0) circle [radius=0.2];
\node [right] at (-6.7,-0.7) {$4$};

\draw[line width=1.5pt] [-] (-10+0.25, 0) .. controls(-9,0) ..(-8-0.25, 0);
\node [right] at (-9.7,0.5) {$1$};
\draw[line width=1.5pt] [-] (-12+0.25, 0.25) .. controls(-10,1.8) ..(-8-0.25, 0.25);
\node [right] at (-10.7,1.8) {$\nicefrac{\left(\sqrt{5} - 1\right)}{2}$};
\draw[dashed, line width=1.5pt] [-] (-10+0.25, -0.25) .. controls(-8,-1.8) .. (-6-0.25, -0.25);
\node [right] at (-8.7,-1.8) {$x = 1$ or $0$};

\end{tikzpicture}
\hspace{1.5cm}
\begin{tikzpicture}[pre/.style={<-,shorten <=1.5pt,>=stealth,thick}, post/.style={->,shorten >=1pt,>=stealth,thick}, scale=0.7]
\tikzstyle{every node}=[draw,shape=rectangle,minimum size=5mm, inner sep=0];
\tikzstyle{edge} = [draw,thick,-]
\tikzstyle{every node}=[shape=circle,minimum size=8mm, inner sep=0];

\draw [fill](-12,0) circle [radius=0.2];
\node [right] at (-12.7,-0.7) {$1$};
\draw [fill](-10,0) circle [radius=0.2];
\node [right] at (-10.7,-0.7) {$2$};
\draw [fill](-8,0) circle [radius=0.2];
\node [right] at (-8.7,-0.7) {$3$};
\draw [fill](-6,0) circle [radius=0.2];
\node [right] at (-6.7,-0.7) {$4$};

\draw[line width=1.5pt] [-] (-10+0.25, 0) .. controls(-9,0) ..(-8-0.25, 0);
\node [right] at (-9.7,0.5) {$1$};
\draw[line width=1.5pt] [-] (-12+0.25, 0.25) .. controls(-10,1.8) ..(-8-0.25, 0.25);
\node [right] at (-10.7,1.8) {$1/2$};
\draw[dashed, line width=1.5pt] [-] (-10+0.25, -0.25) .. controls(-8,-1.8) .. (-6-0.25, -0.25);
\node [right] at (-8.7,-1.8) {$x = 1$ or $0$};
\end{tikzpicture}

\caption{Bipartite graph where $S = \{1, 2\}$ and $B = \{3, 4\}$, with $d = 2$: vertex $1$ becomes critical before $4$ arrives. The adversary is allowed to choose edge $(2,4)$ to be either  $1$ or $0$. Left: instance for the deterministic case. Right: instance for the randomized case.}
\label{fig:ex:constrained}
\end{figure}

\begin{proof}
  {\bf Deterministic case:} Consider the example on the left of Figure \ref{fig:ex:constrained}. When seller $1$ becomes critical, the algorithm either matches her to buyer $3$, or lets $1$ depart unmatched. The adversary then chooses $x$ accordingly. Thus the competitive ratio cannot exceed:
  $$ \max \left(\min_{x \in \{0, 1\}} \frac{\frac{\sqrt{5} - 1}{2} + x}{\max(\frac{\sqrt{5} - 1}{2} + x, 1)}, \min_{x \in \{0, 1\}} \frac{1}{\max(\frac{\sqrt{5} - 1}{2} + x, 1)} \right) = \frac{\sqrt{5} - 1}{2}.$$

  {\bf Stochastic case:} Consider the example on the right of Figure \ref{fig:ex:constrained}. Similarly to the deterministic case, when seller $1$ becomes critical, the algorithm decides to match her to $3$ with probability $p$. The adversary then chooses $x$ accordingly. Thus the competitive ratio cannot exceed:
  $$ \max_{p \in [0,1]} \min_{x \in \{0, 1\}} \frac{p(1/2 + x) + (1-p)}{\max(1/2 + x, 1)} = 4/5.$$
\end{proof}

\subsection*{Tightness of the analysis}
We will now show that our analyses for both the DDA and PDDA algorithms are tight:
\begin{claim}
  \label{cl:tightness}
  There exists a \emph{constrained} bipartite graph for which DDA is $\nicefrac{1}{(2 - \epsilon)}$-competitive and for which the PDDA is $\nicefrac{1}{(4 - 2 \epsilon)}$ -competitive.
\end{claim}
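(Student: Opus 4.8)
The goal is to exhibit a single constrained bipartite graph (together with a suitable number of vertices arriving in a fixed order) on which DDA collects only $\tfrac{1}{2-\epsilon}$ of the offline optimum, and then to lift the same instance through the PDDA reduction to lose another factor essentially $2$. The natural starting point is the instance of Figure \ref{fig:ex:constrained}, or more precisely a scaled "chain" version of it: a long path of sellers and buyers $s_1, b_1, s_2, b_2, \dots$ in which consecutive pairs are joined by edges whose weights are chosen so that, each time a new buyer arrives, she profitably outbids the buyer currently matched to the relevant seller, forcing the tentative matching to shift. The key phenomenon to reproduce is the one flagged right before Proposition \ref{prop:factor2}: a buyer's profit margin $q_b$ can \emph{strictly decrease} after a seller departs, because the buyer then faces new competition while losing access to already-departed sellers. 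By tuning the weights along the chain so that almost all of the dual value that a buyer holds at her arrival ($q_b^i$) has drained away by the time she is matched or departs ($q_b^f \approx 0$), we make the bound $\mathcal{O} \le \sum_s p_s^f + \sum_b q_b^i = \sum_b q_b^i$ nearly a factor $2$ above $\mathcal{A} = \sum_s p_s^f + \sum_b q_b^f$, since in the extremal instance $\sum_b q_b^f$ and $\sum_s p_s^f$ are both negligible compared to $\sum_b q_b^i$. Concretely I would set up weights so that the offline optimum is achieved by a perfect matching along "long" edges of total weight close to $1$, while DDA, constrained to finalize only at criticality, ends up finalizing just one short edge of weight $\epsilon$ (or a vanishing fraction), mirroring exactly the $\tfrac{\sqrt5-1}{2}$ computation but pushed to the worst case $x=0$ with rescaled weights.

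The first step is therefore to write down the family of instances explicitly: fix $d$ large enough that all the relevant sellers are still alive when the corresponding buyers arrive (this is where "constrained bipartite" and the arrival order matter), specify the edge weights as a geometric or linear sequence parameterized by $\epsilon$, and trace the run of DDA — the sequence of ascending auctions, the tentative matching after each buyer arrives, and which edge is finalized when the first seller becomes critical. The second step is the bookkeeping: compute $\sum_b q_b^i$, $\sum_b q_b^f$, $\sum_s p_s^f$, and $\mathcal{A}$ for the instance, verify via Claim \ref{cl:conservation} that they are consistent, and check that $\mathcal{A}/\mathcal{O} \to \tfrac12$ as $\epsilon \to 0$, i.e. $\mathcal{A} = \tfrac{1}{2-\epsilon}\,\mathcal{O}$ for the chosen scaling. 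The third step handles PDDA: one feeds the \emph{same} weighted graph (now viewed as arbitrary, forgetting the bipartition) into PDDA, so each vertex gets a virtual buyer and seller copy; the point is that with the remaining $\tfrac12$ coin flip over the decomposition of the $2$-matching into two matchings, the algorithm keeps in expectation only half of what DDA would have kept on the favorably-oriented bipartite copy, and one must check that the $2$-matching produced here really is a union of paths (as argued in Section \ref{sec:arbitrary}) whose two colour classes are both "bad," so no averaging rescues a constant better than $\tfrac{1}{4-2\epsilon}$. A clean way to do this is to choose the original graph to already be (essentially) the constrained bipartite instance oriented so that $k<l$ forces $k$ to play seller and $l$ buyer on every edge, so the virtual construction reproduces exactly the DDA instance on one side.

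The main obstacle I expect is the PDDA half: one has to control the \emph{expectation} over the coin flips at critical times, and show that the correlation structure (seller forces next vertex to be buyer, etc.) does not accidentally let the algorithm pick the good matching with probability more than $\tfrac12$ on this particular instance. This requires choosing the chain so that the two matchings in the path decomposition are genuinely symmetric in a way that makes each worth only $\approx \tfrac12 \cdot \tfrac{1}{2-\epsilon}\,\mathcal{O}$, so that neither the $\tfrac14$ nor the analysis slack can be beaten. A secondary, more routine obstacle is making sure the instance is a \emph{valid} constrained bipartite graph for DDA — every seller must be unreachable from earlier buyers — while simultaneously, under the PDDA re-interpretation, every edge is oriented consistently with arrival order; I would resolve this by indexing arrivals as $s_1 < b_1 < s_2 < b_2 < \cdots$ and putting edges only between $b_j$ and $s_i$ with $i \le j$, which is automatically constrained and automatically arrival-consistent. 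Once the instance and its DDA trace are pinned down, both ratio computations are short.
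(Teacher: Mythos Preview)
Your instinct to start from Figure~\ref{fig:ex:constrained} is exactly right, and in fact no ``long chain'' generalization is needed: the paper's tight instance \emph{is} that four-vertex path $1\!-\!3\!-\!2\!-\!4$ with $S=\{1,2\}$, $B=\{3,4\}$, $d=2$, just with the weights set to $v_{1,3}=1-\epsilon$, $v_{2,3}=1$, $v_{2,4}=1$ (rather than the $\tfrac{\sqrt5-1}{2}$ and variable $x$ used for the upper-bound argument). When buyer $3$ arrives she tentatively matches to seller $2$, so seller $1$ becomes critical unmatched and departs; when buyer $4$ then arrives, she and $3$ compete for seller $2$, driving $p_2^f=1$ and $q_3^f=q_4^f=0$. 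DDA finalizes one edge of weight $1$, while offline takes $(1-\epsilon)+1=2-\epsilon$. Your sentence ``DDA ends up finalizing just one short edge of weight $\epsilon$'' is therefore backwards --- that would give ratio $\epsilon$ and contradict Proposition~\ref{prop:factor2}; the correct picture is that DDA gets one \emph{full-weight} edge while the optimum gets two.

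The PDDA step is also simpler than your path-decomposition worries suggest. On the same four-vertex instance fed as an arbitrary graph, the virtual run again leaves $s_1$ unmatched and tentatively matches $s_2$ to some $b_\ell$ with $\ell\in\{3,4\}$. When vertex $2$ becomes critical its status is still \emph{undetermined} (nothing has propagated to it, since vertex $1$ was isolated in the virtual matching), so a single fair coin decides everything: with probability $\tfrac12$ vertex $2$ is a seller and the weight-$1$ edge is collected, and with probability $\tfrac12$ it is a buyer and nothing is ever collected thereafter. Hence $\mathbb{E}[\mathcal{A}]=\tfrac12$ against $\mathcal{O}=2-\epsilon$, giving $\tfrac{1}{4-2\epsilon}$ directly --- no need to argue that both colour classes of a long $2$-matching are symmetrically bad, nor to tune a geometric sequence of weights.
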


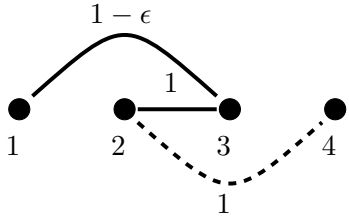
\begin{figure}[!ht]
  \centering

\def\radius{2.6}
\def \Pointsize {1.4pt}
\begin{tikzpicture}[pre/.style={<-,shorten <=1.5pt,>=stealth,thick}, post/.style={->,shorten >=1pt,>=stealth,thick}, scale=0.7]
\tikzstyle{every node}=[draw,shape=rectangle,minimum size=5mm, inner sep=0];
\tikzstyle{edge} = [draw,thick,-]
\tikzstyle{every node}=[shape=circle,minimum size=8mm, inner sep=0];

\draw [fill](-12,0) circle [radius=0.2];
\node [right] at (-12.7,-0.7) {$1$};
\draw [fill](-10,0) circle [radius=0.2];
\node [right] at (-10.7,-0.7) {$2$};
\draw [fill](-8,0) circle [radius=0.2];
\node [right] at (-8.7,-0.7) {$3$};
\draw [fill](-6,0) circle [radius=0.2];
\node [right] at (-6.7,-0.7) {$4$};

\draw[line width=1.5pt] [-] (-10+0.25, 0) .. controls(-9,0) ..(-8-0.25, 0);
\node [right] at (-9.7,0.5) {$1$};
\draw[line width=1.5pt] [-] (-12+0.25, 0.25) .. controls(-10,1.8) ..(-8-0.25, 0.25);
\node [right] at (-10.7,1.8) {$1 - \epsilon$};
\draw[dashed, line width=1.5pt] [-] (-10+0.25, -0.25) .. controls(-8,-1.8) .. (-6-0.25, -0.25);
\node [right] at (-8.7,-1.8) {$1$};
\end{tikzpicture}

\caption{Bipartite graph where $S = \{1, 2\}$ and $B = \{3, 4\}$, with $d = 2$: vertex $1$ becomes critical before $4$ arrives. Dotted edges represent edges that are not know to the algorithm initially.}
\label{fig:ex:tightness}

\end{figure}

\begin{proof}
Consider the input graph in Figure \ref{fig:ex:tightness}.

{\bf DDA case:} Vertex $2$ will be temporarily matched to $3$, and vertex $1$ will depart unmatched, hence the factor $\nicefrac{1}{2}$.

{\bf PDDA case:} Similarly, $1$ will depart unmatched. When $2$ becomes critical, with probability $\nicefrac{1}{2}$, she will be determined to be a \emph{buyer} and will depart unmatched. Therefore the PDDA collects in expectation $\nicefrac{1}{2}$ while the offline algorithm collects $2 - \epsilon$. 
\end{proof}

\subsection*{Relaxing our assumptions}
We consider the \emph{Adversarial departures} (AD) setting, where the adversary is allowed to choose the departure time $d_i$ of vertex $i$. We assume that the online algorithm knows $d_i$ at the time of arrival of $i$.

\begin{claim}
No algorithm is constant-competitive in the AD setting.
\end{claim}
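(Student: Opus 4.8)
The plan is to exhibit an explicit family of instances, parametrized by $n$, on which every (possibly randomized) online algorithm collects only an $O(1/n)$ fraction of the offline optimum, even though all departure times $d_i$ are announced at arrival. The construction mimics the tension already present in Figure \ref{fig:hard:basic}, but chains it: an agent who becomes critical is forced to commit (match now, losing the option of a much better future match, or pass, risking that no future match materializes), and the adversary resolves the ambiguity only after the commitment is made. With adversarial $d_i$ we can make the ``critical'' moments interleave in a way that the fixed-$d$ model forbids.

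Concretely, I would take agents $1,2,\dots$ arriving one per period. Agent $1$ is given a tiny lifespan so it becomes critical at period $2$; at that point its only available edge is $v_{1,2}=1$. The algorithm must decide (with some probability $p_1$) whether to match $1$ to $2$. If it does not, the adversary ``activates'' agent $2$ with a large lifespan and later reveals an edge $v_{2,3}=M$ for some large $M$; if it does match $1$ to $2$, the adversary instead makes all subsequent edges incident to later agents worthless, so the offline optimum is just $1$ and the algorithm also got $1$ — no loss there — so the adversary will only take the second branch when it helps. Iterating this with geometrically growing weights $M, M^2, \dots$ forces, at each stage, either an immediate small collection that forecloses a large future gain, or a pass that the adversary punishes. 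A clean way to package the bound: by Yao's principle it suffices to give a distribution over instances on which every deterministic algorithm is bad; put the ``continue'' branch weight so large that any algorithm matching early loses almost everything, and randomize the stopping stage so that any algorithm that passes too long is also caught.

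The key steps, in order, are: (1) describe the instance family with a parameter $n$ controlling the number of stages and the growth rate of the weights, spelling out the arrival times and adversarially-chosen departure times $d_i$ so that consecutive agents become critical in the ``wrong'' order (the feature impossible when $d$ is constant); (2) compute the offline optimum on each realization — it should be dominated by a single high-weight late edge, of order $M^{n}$; (3) for an arbitrary online algorithm, track the probability $p_k$ that it has already committed a low-value match by stage $k$, and show that its expected reward is at most a constant times $\max_k p_k M^{k}\cdot(\text{something small}) + (1-\text{all }p_k)\cdot(\text{small})$, which the adversary drives to $o(\mathcal{O})$ by choosing where to stop; (4) conclude that the competitive ratio tends to $0$ as $n\to\infty$, so no constant-competitive algorithm exists, and note the argument uses only knowledge of $d_i$ at arrival, so it applies in the AD setting as stated.

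The main obstacle I anticipate is getting the accounting in step (3) genuinely airtight against \emph{randomized} and \emph{adaptive} algorithms: an algorithm can hedge by matching early with small probability, and one must verify that no mixed strategy escapes the trap. The cleanest route is to fix the right distribution over the stopping stage (e.g. uniform over $n$ stages, or a geometric-type distribution matched to the weight growth) and invoke Yao's lemma, reducing to deterministic algorithms where the ``commit or pass'' decision at each critical moment is a single bit; then a short potential/telescoping argument over the $n$ stages gives the $O(1/n)$ bound. A secondary subtlety is ensuring the adversarial departure schedule is internally consistent — that each agent is still present exactly when its decisive edge is revealed and its critical period is exactly when we claim — which is just bookkeeping but needs to be stated precisely enough that the reader sees the fixed-$d$ obstruction is genuinely removed.
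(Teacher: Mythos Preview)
Your overall strategy---geometric edge weights and an adversarially chosen stopping point, with the algorithm forced to commit before the stopping point is revealed---is exactly the right mechanism, and it is what the paper uses. The differences are in the topology and the accounting, and here your proposal is more complicated than necessary and has a real snag.

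First, the snag. In the AD model, $d_2$ is announced when agent $2$ arrives, which is no later than the moment agent $1$ becomes critical; so the adversary cannot ``activate agent $2$ with a large lifespan'' \emph{after} observing whether the algorithm matched $1$--$2$. You anticipate this and propose Yao as a fix, which is fine, but then the path topology bites you in a different way: matching $(k,k+1)$ removes $k+1$ but leaves $(k+2,k+3)$ available, so early commitments do not fully foreclose late gains, and the telescoping you sketch in step~(3) is not straightforward. Indeed, a path with all $d_i$ equal is just the fixed-$d$ model, where PDDA is already $1/4$-competitive---so for the path to be hard you must exploit varying $d_i$'s in a specific way that you have not yet pinned down.

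The paper avoids all of this with a \emph{star} rather than a path. Vertex $1$ has $d_1=n$, every other vertex $j\ge 2$ has $d_j=0$, and $v_{1,j}=M^j$. Now vertex $1$ is the only vertex that can ever be matched; it sees $2,3,\dots$ one at a time, each departing immediately, and matching it once forecloses \emph{every} other option exactly. Because all instances agree on the prefix, the probabilities $x_k=\Pr[\text{alg matches }1\text{ to }k]$ are well-defined independently of the stopping point $K$; the constraint $\sum_k x_k\le 1$ yields some $k$ with $x_k\le 1/n$, and the adversary sets $K=k$. For $M\ge n$ the algorithm collects at most $x_kM^k+M^{k-1}\le (2/n)\,M^k=(2/n)\,\mathcal{O}$. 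No Yao, no adaptive branching, no departure-schedule bookkeeping---the star makes ``commit once, lose everything else'' literal. Your plan is aimed at the same phenomenon, but the star is the clean realization.
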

\begin{proof}
  Let us consider a graph $G_K$ with $n$ vertices, where $K$ will be chosen later by the adversary. For all $j \in [2,K]$, $v_{1,j} = M^j$, and for all other $(i,j)$, $v_{i,j} = 0$. Assume that vertex $1$ departs after $n$ arrivals, while all other vertices depart right away. For $k \geq 2$, let $x_k$ be the probability that the online algorithm matches vertex $1$ to $k$ when $k$ arrives.
  Observe that because the algorithm does not know $K$, the $x$ has to be valid when $K = n$. Therefore, $\sum_{k = 1}^n x_k \leq 1$.
  Therefore, there exists $k$ such that $x_k \leq 1/n$. The adversary chooses $K = k$. This implies that $\A = \sum_{j \leq k} x_j M^j \leq \frac{M^k}{n} + M^{k-1} \leq \frac{2}{n} \mathcal{O}$, where the last inequality is obtained by taking $M \geq n$.
\end{proof}

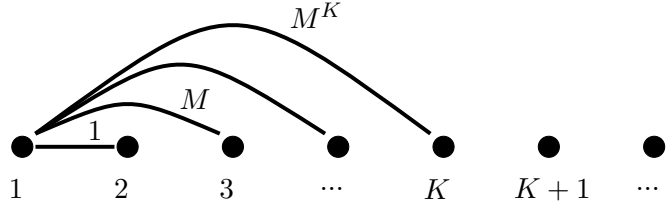
\begin{figure}[!ht]
  \centering
  \begin{tikzpicture}[pre/.style={<-,shorten <=1.5pt,>=stealth,thick}, post/.style={->,shorten >=1pt,>=stealth,thick}, scale=0.7]
    \tikzstyle{every node}=[draw,shape=rectangle,minimum size=5mm, inner sep=0];
    \tikzstyle{edge} = [draw,thick,-]
    \tikzstyle{every node}=[shape=circle,minimum size=8mm, inner sep=0];

    \draw [fill](-14,0) circle [radius=0.2];
    \node [right] at (-14.7,-0.8) {$1$};
    \draw [fill](-12,0) circle [radius=0.2];
    \node [right] at (-12.7,-0.8) {$2$};
    \node [right] at (-13.2, 0.3) {$1$};
    \draw[line width=1.5pt] [-] (-14+0.25, 0) .. controls(-13,0) ..(-12-0.25, 0);

    \draw [fill](-10,0) circle [radius=0.2];
    \node [right] at (-10.7,-0.8) {$3$};
    \node [right] at (-11.3,0.9) {$M$};
    \draw[line width=1.5pt] [-] (-14+0.25, 0.25) .. controls(-12,1) ..(-10-0.25, 0.25);

    \draw [fill](-8,0) circle [radius=0.2];
    \node [right] at (-8.7,-0.8) {$...$};
    \draw[line width=1.5pt] [-] (-14+0.25, 0.25) .. controls(-11,2) ..(-8-0.25, 0.25);

    \draw [fill](-6,0) circle [radius=0.2];
    \node [right] at (-6.7,-0.8) {$K$};
    \node [right] at (-9,2.5) {$M^K$};
    \draw[line width=1.5pt] [-] (-14+0.25, 0.25) .. controls(-10,3) ..(-6-0.25, 0.25);

    \draw [fill](-4,0) circle [radius=0.2];
    \node [right] at (-4.7,-0.8) {$K+1$};
    \draw [fill](-2,0) circle [radius=0.2];
    \node [right] at (-2.7,-0.8) {$...$};

  \end{tikzpicture}
    \caption{Graph where $d_1 = N$ and $d_i = 0$ for all $i > 1$. Vertices $k > K$ have no edges.}
  \label{fig:adv_dep}
\end{figure}

We now consider the \emph{Adversarial departure distribution} (ADD) setting where $d_i$ are sampled i.i.d from a distribution chosen by the adversary. We assume furthermore that the online algorithm knows the realization $d_i$ upon the arrival of vertex $i$.

\begin{claim}
No algorithm is constant-competitive in the ADD setting
\end{claim}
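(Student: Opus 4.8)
The plan is to exhibit a single adversarial distribution $\mathcal{D}$ over departure times which makes the same hardness argument as in the AD setting go through, despite the fact that departure times are now i.i.d.\ rather than individually chosen. The key realization is that if $\mathcal{D}$ puts a tiny but positive mass on a very large departure time $N$ (think $N = T$, the horizon) and all remaining mass on departure time $0$, then with constant probability exactly one ``early'' vertex is long-lived, and we can recycle Figure~\ref{fig:adv_dep}: that long-lived vertex plays the role of vertex $1$, and the subsequent vertices (which depart immediately) carry the geometrically growing edge weights $v_{1,j} = M^j$. The online algorithm, when a vertex $k$ arrives and finds it adjacent to the surviving long-lived vertex, must irrevocably decide whether to match; since it cannot know how many more high-weight neighbors are still to come, no strategy can do well against all $K$ simultaneously.

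The steps I would carry out, in order. First, fix the instance: vertices $1,2,\dots,T$ arrive in order; the edge weights are $v_{1,j} = M^j$ for $j \ge 2$ and $v_{i,j}=0$ otherwise, with $M \ge T$ as before; the departure distribution $\mathcal{D}$ is $d_i = N := T$ with probability $1/T$ and $d_i = 0$ with probability $1 - 1/T$. Second, condition on the event $\mathcal{E}$ that vertex $1$ is the one that gets $d_1 = N$ and that no vertex $j \in \{2,\dots,T\}$ gets $d_j = N$; since the draws are independent, $\P[\mathcal{E}] = \tfrac{1}{T}(1 - \tfrac{1}{T})^{T-1} \ge \tfrac{1}{eT}$, a constant-times-$1/T$ quantity. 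On $\mathcal{E}$ the realized graph is exactly $G_K$ of Figure~\ref{fig:adv_dep} with $K = T$, so the offline optimum is $M^T$. Third, run the same averaging argument: let $x_k$ be the probability that, conditioned on $\mathcal{E}$ and on the realized departure profile, the algorithm matches vertex $1$ to vertex $k$ at $k$'s arrival; then $\sum_{k=2}^{T} x_k \le 1$, so some $k^\star$ has $x_{k^\star} \le 1/(T-1)$. Here is where the ADD setting needs slightly more care than AD: the adversary cannot retroactively choose $K$, because $K$ is pinned down by the realized departures. Instead I would either (a) truncate the instance so that vertices $k > k^\star$ simply have weight $0$ on all their edges — but this requires the adversary to know $k^\star$, which depends on the algorithm, which is fine since the adversary designs the instance knowing the algorithm — or, cleaner, (b) observe that the algorithm's behavior on prefix $\{2,\dots,k\}$ cannot depend on weights of later vertices, so we may define $x_k$ using only the prefix and the same averaging bound holds; then on $\mathcal{E}$ we get $\A \le \sum_{j \le k^\star} x_j M^j + (\text{value from matches not involving vertex }1)$. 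Since all nonzero weight lives on edges incident to vertex $1$, the second term is $0$, giving $\A \mid \mathcal{E} \;\le\; \frac{M^{k^\star}}{T-1} + M^{k^\star - 1} \le \frac{2 M^{k^\star}}{T-1}$. Fourth and last, but this needs the instance to actually have $\mathcal{O}$ comparable to $M^{k^\star}$ rather than $M^T$ — so in fact approach (a) is the right one: the adversary, knowing the algorithm, hard-codes $v_{1,j} = 0$ for $j > k^\star$ into $G$, so that on $\mathcal{E}$ we have $\mathcal{O} = M^{k^\star}$ and hence $\A \le \frac{2}{T-1}\mathcal{O}$ on $\mathcal{E}$; off $\mathcal{E}$, $\A \le \mathcal{O}$ trivially, and unconditionally $\mathcal{O} \le M^{k^\star}$ always while $\E[\A] \le \frac{2}{T-1}\mathcal{O} + \P[\mathcal{E}^c]\cdot 0$... — let me instead bound $\E[\A]/\E[\mathcal{O}]$. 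Since $\mathcal{O} = M^{k^\star}$ on $\mathcal{E}$ and $0$ off it, $\E[\mathcal{O}] = \P[\mathcal{E}] M^{k^\star}$, while $\E[\A] \le \P[\mathcal{E}]\cdot \frac{2M^{k^\star}}{T-1} + 0$, so the ratio is $\le \frac{2}{T-1} \to 0$ as $T \to \infty$.

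The main obstacle I anticipate is precisely this mismatch between the AD argument — where the adversary picks $K$ \emph{after} seeing the algorithm's randomness-free commitment pattern — and the ADD setting, where $K$ must be baked into the instance (hence into the weights) in advance. The resolution is to have the adversary, who is allowed to know the algorithm, compute $k^\star$ from the algorithm's prefix behavior ahead of time and design $G$ with nonzero weights only on $\{2,\dots,k^\star\}$; then the averaging bound $x_{k^\star} \le 1/(T-1)$ still applies because it only used that $\sum_k x_k \le 1$ (a consequence of vertex $1$ being matched at most once), and the conditioning on $\mathcal{E}$ supplies the required $G_{k^\star}$-shaped realized graph with constant probability. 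One should double-check that the algorithm's decision to match vertex $1$ to $k$ is indeed made with only prefix information — true here since all later vertices $>k^\star$ have zero-weight edges and so carry no useful signal — and that no additional value can be collected from the zero-weight edges, which is immediate.
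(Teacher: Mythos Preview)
There is a genuine gap in the final accounting. Your assertion that ``$\mathcal{O} = 0$ off $\mathcal{E}$'' is false: on the truncated instance, $\mathcal{O} = M^{k^\star}$ whenever $d_1 = N$, an event of probability $1/T$, whereas $\mathcal{E}$ additionally demands $d_j = 0$ for every $j \ge 2$ and so has probability only $(1/T)(1-1/T)^{T-1} \approx 1/(eT)$. Likewise, the bound $\E[\A] \le \P[\mathcal{E}]\cdot 2M^{k^\star}/(T-1) + 0$ is unjustified: on $\{d_1 = N\} \setminus \mathcal{E}$ some later vertex $j$ is itself long-lived, the algorithm \emph{sees} this (departure times are revealed on arrival in the ADD setting), and may therefore behave completely differently from its $\mathcal{E}$-conditioned behavior --- so your $x_k$-bound no longer applies there. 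With only the trivial $\A \le \mathcal{O}$ on that event, the ratio estimate becomes
\[
\frac{\E[\A]}{\E[\mathcal{O}]} \;\le\; (1-\tfrac{1}{T})^{T-1}\cdot \frac{2}{T-1} \;+\; \bigl[1 - (1-\tfrac{1}{T})^{T-1}\bigr] \;\longrightarrow\; 1 - \tfrac{1}{e},
\]
which is bounded away from zero.

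The root of the problem is that nothing in your averaging step forces $k^\star$ to be small; when $k^\star = \Theta(T)$, the probability that some bidder $j \le k^\star$ draws $d_j = N$ is a constant, and on that event you have no handle on $\A$. The paper's construction addresses precisely this point: it keeps the number of ``bidders'' down to $\sqrt{n}$, so that with probability $1 - O(1/\sqrt{n})$ none of them is long-lived, and it places $n\ln n$ edge-free early vertices so that with high probability at least one is long-lived (making $\E[\mathcal{O}]$ large). The averaging then runs over only $\sqrt{n}$ slots, yielding a ratio of order $\ln n/\sqrt{n}$. Your construction can be repaired along the same lines --- restrict $k^\star \in \{2,\dots,\lfloor\sqrt{T}\rfloor\}$, accept the weaker $x_{k^\star} \le 1/(\sqrt{T}-1)$, and bound the probability that any such bidder is long-lived by $\sqrt{T}/T$ --- but as written the proof does not go through.
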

\begin{proof}
  The idea is that we can construct a graph that exhibits the same properties as Figure \ref{fig:adv_dep}, even with i.i.d departures. Fix $n \geq 1$ and assume that departures are distributed according to the following distribution: w.p. $\nicefrac{1}{n}$, $d_i = n^2$, and w.p. $(1 - \nicefrac{1}{n})$, $d_i = 0$.

  The arrivals are defined as follows: the first $n \ln{n}$ vertices have no edge between themselves. The adversary then selects $K \in [0, \sqrt{n}]$.
  For $i \leq n \ln{n}$ and $j \in [n \ln{n}, n \ln{n} + K]$, $v_{i,j} = M^j$. Where $M$ is a large constant to be defined later. Vertices $j > n \ln{n} + K$ have no edges.

  Let $E_1$ be the event that there exists $j \in [n \ln{n} + 1, n\ln{n}+\sqrt{n}]$ such that $d_j > 1$.
  $$\P[E_1] = 1 - (1 - 1/n)^{\sqrt{n}} = \frac{1}{\sqrt{n}} + o\left(\nicefrac{1}{\sqrt{n}}\right).$$

  Let $E_2$ be the event that there are less than $\sqrt{n}$ vertices $j \in [0, n\ln{n}]$ such that $d_j > 1$. The (random) number of such vertices $X$ is binomially distributed with parameters $n \ln{n}$ and $\nicefrac{1}{n}$. Thus
  $$\P[E_2] = \P[X \leq \sqrt{n}] \geq 1 - O(\nicefrac{1}{\sqrt{n}}).$$

  We can write:
  $\E[\A] \leq \P[E_1 \cup E_2^c]\sum_{k = 0}^K M^k + \E[\A \mid E_1^c \cap E_2]$
  Conditional on $E_1^c$ and $E_2$, the best any algorithm can do is match each vertex $j \in [n\ln{n}+1, n\ln{n}+ \sqrt{n}]$ with probability $\ln{n} / \sqrt{n}$.
  Therefore $\E[\A] \leq O(\nicefrac{\ln{n}}{\sqrt{n}}) \sum_{k = 0}^K M^k$.

  Let $E_3$ be the event that there exists $i \in [0, n\ln{n}]$ such that $d_i > 1$. $\P[E_3] \geq (1 - \frac{1}{n})$. Therefore, $\E[\mathcal{O}] \geq M^k \P[E_3] \geq (1 - \frac{1}{n}) M^k$.
  For M large enough, we can conclude that $\nicefrac{\E[\A]}{\E[\mathcal{O}]} = O(\nicefrac{1}{\sqrt{n}})$.

\end{proof}

Finally, we consider the \emph{Stochastic Unknown Departures} (SUD) setting where the adversary chooses a departure rate $\delta$, and departures $d_i$ are i.i.d geometric random variables with parameter $\delta$. The online  algorithm knows $\delta$ but does not know $d_i$ even when $i$ becomes critical.

\begin{claim}
  Even when the departure process $\mathcal{D}$ is memoryless, if the algorithm doesn't know when vertices become critical, it cannot obtain a constant competitive ratio.
\end{claim}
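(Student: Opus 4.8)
The plan is to reuse the ``single patient long-lived vertex vs.\ geometrically many high-value neighbors'' construction from Figure~\ref{fig:adv_dep}, but now exploit the fact that the algorithm cannot observe criticality. The key conceptual point is that, without knowing $d_i$, the algorithm cannot distinguish a vertex that is about to depart from one that will stay; in particular it cannot implement the ``wait until critical'' behavior that makes PDDA work. I would set up the instance as follows: fix a large $n$, let the departure rate be $\delta = 1/n$ (so $\mathbb{E}[d_i] = n$), and have the first $n$ vertices be isolated ``seed'' vertices. The adversary then privately picks $K \in [1, n]$ and introduces high-value vertices: vertex $n+j$ (for $j = 1, \dots, K$) has an edge of value $M^{n+j}$ to each seed vertex that is still alive, and all later vertices are isolated. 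Crucially, because arrivals look identical up to time $n+K$ regardless of $K$, the algorithm's behavior on vertices $n+1, \dots, n+K$ cannot depend on $K$.

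The analysis then proceeds by a counting/averaging argument parallel to the ADD proof. First I would argue that with probability $1 - o(1)$ at least one seed vertex survives long enough to still be present when vertex $n+1$ arrives (since each survives $n$ arrivals in expectation and there are $n$ of them, standard concentration gives survival of $\Theta(n)$ of them w.h.p.), so $\mathbb{E}[\mathcal{O}] \geq (1-o(1)) M^{n+K}$. Next, let $x_j$ be the probability that the algorithm finalizes a match of value $M^{n+j}$ at or before the arrival of vertex $n+j$ — this quantity is well-defined independent of $K$ since the algorithm cannot see $K$ or the criticality of the seeds. Because matched vertices leave immediately and there is effectively ``one'' high-value resource available at a time (the surviving seed), the total probability mass $\sum_{j=1}^{n} x_j$ of ever collecting any high-value edge is at most $O(1)$ — here one must be slightly careful, because multiple seeds may survive, but the number of surviving seeds is $O(n)$ w.h.p.\ and each high-value vertex matches at most one of them, so $\sum_j x_j \leq O(n)$ and hence by averaging there is some $j^\star$ with $x_{j^\star} = O(1)$. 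The adversary picks $K = j^\star$. Since the geometric growth $M^{n+j}$ dominates, the algorithm's expected reward is at most $x_{j^\star} M^{n+K} + (\text{everything collected after } n+K \text{, which is } 0) + (\text{lower-order terms from smaller } M\text{-powers})$, and taking $M$ large makes the lower-order powers negligible; this gives $\mathbb{E}[\mathcal{A}] = O(1) \cdot \frac{1}{n} \cdot M^{n+K}$ after re-normalizing the averaging bound, hence $\mathbb{E}[\mathcal{A}]/\mathbb{E}[\mathcal{O}] = O(1/n) \to 0$.

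The main obstacle I anticipate is making the ``only $O(n)$ total high-value matches can ever happen'' step rigorous in the presence of memorylessness: a surviving seed vertex is indistinguishable from a fresh one, so the adversary could in principle make the algorithm ``waste'' its matches, but conversely the algorithm might get lucky and match several seeds. The clean way around this is to bound $\sum_j x_j$ purely by the structural fact that each high-value vertex $n+j$ has degree at most (number of currently alive seeds) $\leq n$, and it can be matched to at most one of them and then departs; so the expected total number of finalized high-value edges is trivially at most $\min(n, K) \leq n$, which after dividing by $n$ choices of $j$ (the averaging over $K$) yields the claimed $O(1/n)$ ratio. A secondary technical point is confirming that the geometric distribution's memorylessness is exactly what prevents the algorithm from ever learning that a seed is ``old and likely to leave soon'' — this is immediate from the memoryless property but should be stated explicitly, since it is the whole reason the SUD setting differs from the known-critical stochastic setting of Proposition~\ref{prop:single_bid}.
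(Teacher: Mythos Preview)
Your construction has a genuine gap in the averaging step. With $n$ seed vertices, each able to absorb one high-value match, the algorithm can simply match every arriving high-value vertex $n+j$ to some surviving seed. Since $\Theta(n)$ seeds survive past time $n$ with high probability, this greedy strategy collects $M^{n+K}$ whenever $K$ is not too large, and more generally one can randomize to guarantee $x_j \geq c$ for a fixed constant $c$ and all $j$. Your own arithmetic reflects this: you correctly bound $\sum_j x_j \le O(n)$, which after averaging over $n$ choices of $K$ gives only $x_{j^\star} = O(1)$---a vacuous bound on a probability. The subsequent claim that this yields $\mathbb{E}[\mathcal{A}] = O(1)\cdot \tfrac{1}{n}\cdot M^{n+K}$ is where the proof breaks: there is no legitimate source for the $1/n$ factor once you have $n$ interchangeable seeds.

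The paper's proof avoids this by using a \emph{single} center vertex (vertex~$1$ in Figure~\ref{fig:adv_dep}) rather than $n$ seeds. Because vertex~$1$ can be matched at most once, the conditional matching probabilities satisfy $\sum_{k=2}^{n} x_k \le 1$, and averaging over the $n$ possible values of $K$ produces some $l$ with $x_l \le 1/n$; setting $K=l$ then gives $\mathcal{A}/\mathcal{O} = O(1/n)$ after taking $M$ large and the survival probability $\delta$ small. The single-vertex construction is not a simplification that you can safely relax---it is exactly what makes the averaging argument bite. Your instinct to introduce many seeds (borrowed from the ADD proof, where it is needed because each seed may individually fail to survive) is misplaced here: in the SUD setting the algorithm's ignorance of criticality already prevents it from timing the match of the lone center vertex, so no redundancy is required, and adding redundancy only helps the algorithm.
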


\begin{proof}
  Consider the graph in Figure \ref{fig:adv_dep}, and assume that at each time step, an unmatched vertex in the graph has probability $1 - \delta$ of departing.

  Conditional on vertex $1$ being present at time $k$, let $x_k$ be the probability that the algorithm matches $1$ to $k$. We have: $\A = \sum_{k = 2}^K x_k \delta^k M^k$.
  Furthermore, observe that $\sum_{k = 2}^n x_k \leq 1$. Therefore there exists $l$ such that $x_l \leq 1/n$. Then take $K = l$.
  $$\A \leq x_{l} \delta^{l} M^{l} + \delta^{l + 1} M^l + \sum_{k \leq l - 1} \delta^{k} M^{k} \leq \mathcal{O} \left(x_{l} + \delta + \sum_{k
  \leq l - 1} \delta^{k} M^{k - l}\right).$$
  Therefore  $\nicefrac{\A}{\mathcal{O}} \leq O(\nicefrac{1}{n})$ for $\delta$ small enough and $M$ large enough.
\end{proof}


\section{Numerical results}
In \emph{PDDA}, we compute a 2-matching over the vertices that are currently present, and select each edge with probability $\nicefrac{1}{2}$. Although this randomization is useful to hedge against the worst case instance, it may be ineffective when the compatibility graph is not generated by an adversary.

Nonetheless, some of the key ideas behind \emph{PDDA} may be useful to construct algorithms that perform well on non-adversarial graphs. This motivates a modified version of the Dynamic Defered Dcceptance algorithm, termed \emph{MDDA}, in which vertices are no longer separated into buyers and sellers and every vertex now has a \emph{price}.

Under MDDA, prices are reset to $0$ after each arrival, and  an auction is conducted on the non-bipartite graph: while there exists an unmatched vertex, one is selected  at random and it bids on its most prefered match. This leads to a tentative matching $M$ over all the vertices that are currently present in the graph\footnote{Note that because the graph is non-bipartite, this auction mechanism may fail to converge to a maximum-weight matching.}.  When a vertex becomes critical, it is matched  according to the tentative matching $M$. This modified algorithm does not have theoretical guarantees, but we will show that it performs well on data-driven compatibility graphs.

Note that MDDA can be thought of as a re-optimization algorithm, where the optimal matching is re-computed when new information becomes available. One can replace the auction mechanism with any algorithm that computes a maximum-weight matching. We implemented this algorithm, termed here \emph{Re-Opt}, where the matching is found by solving a mixed-integer program at each time step.

We  compare the MDDA and Re-Opt  algorithms against three benchmarks that have been previously consdiered, or that are commonly used in practice:
\begin{itemize}
  \item[-] The \emph{Greedy} algorithm. The algorithm  matches vertices as soon as possible to their available neighbor with the highest value (ties are broken in favor of the earliest arrival).
  \item[-] The \emph{Batching}($k$) algorithm. The algorithm waits $k$ times-steps  and then finds a maximum-weight matching. Unmatched vertices are kept in the next batch.\footnote{See \cite{agatz2011dynamic,ashlagi2013kidney} in the case of ride-sharing and kidney exchange respectively.} We  report the best simulation results across parameters $k = 5, 10, 50, 100, 200, 300$.
  \item[-] The Patient algorithm. This algorithm waits until waits until a vertex becomes critical, and matches it to the neighbor with the highest  match value (ties are broken in favor of the earliest arrival). This allows to seperate the value from  knowing the time in which vertices become  critical and the value of optimization.
\end{itemize}

\subsection*{Data}

The first experiment uses data from the National Kidney Registry (NKR), which consists of 1681 patient-donor pairs who have joined the NKR. For any two patient-donor pairs $k$ and $t$, we can determine whether the patient from each pair would have been medically eligible to receive the kidney from the other pair's donor, had they had been present at the same time. If that is the case, we set $v_{k,l} = 1$, otherwise  and $v_{k,l} =0$ (in particular we simply try to maximize the number of matches)\footnote{We ignore here the possibility of larger cyclic exchanges (3 or 4-cycles) or chains, which are common in practice.}.

In the second instance, we use New York City yellow cabs dataset \footnote{http://www.andresmh.com/nyctaxitrips/}, which consists of rides taken in NYC over a year. For any pair $k,l$ of trips, we can compute the Euclidian distance that would have been traveled had the two passengers taken the same taxi (with multiple stops). The value $v_{k,l}$ represents the ``distance saved'' by combining the trips.

In both cases, this enables us to generate a dynamic graph in the following way. For $t \in [1,T]$:
\begin{enumerate}
 \item Sample with replacement an arrival $t$ from the dataset.
 \item For any vertex $l$ that is present in the pool, compute the value $v_{t,l}$ of matching $t$ to $l$.
 \item Sample a departure time $d_t \sim \mathcal{D}$.
\end{enumerate}

We  consider two settings, termed \emph{deterministic} and \emph{stochastic} respectively, in which $\mathcal{D}$ is either constant with value $d$, or exponentially distributed with mean $d$. We will report simulation results for $d = 50, 100, 200, 300$.

\subsection*{Results}

In Figure \ref{fig:taxi}, we observe that both the \emph{Patient} and \emph{Batching} algorithms outperform \emph{Greedy}. Intuitively, having vertices wait until they become critical helps to thicken the market and gives vertices higher valued match options. We notice that when departures are deterministic, \emph{Batching} with the optimal batch size will be almost as efficient as \emph{Re-Opt}. However when the departures are stochastic, there is value in matching vertices as they become critical (\emph{MDDA} and \emph{Re-Opt}).

\begin{figure}[ht!]
  \centering
\includegraphics[scale=0.34]{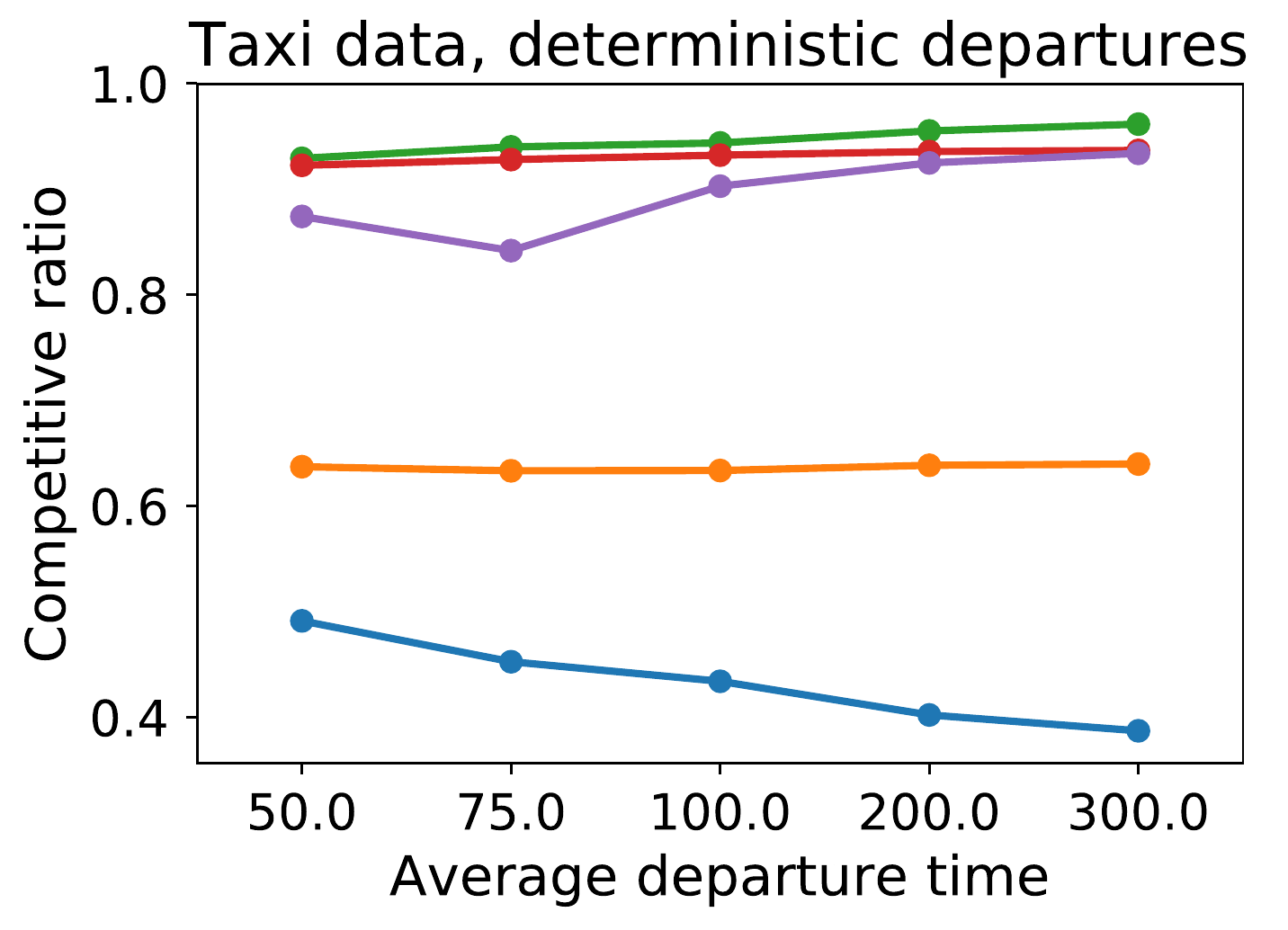}
\hspace{0.2cm}
\includegraphics[scale=0.34]{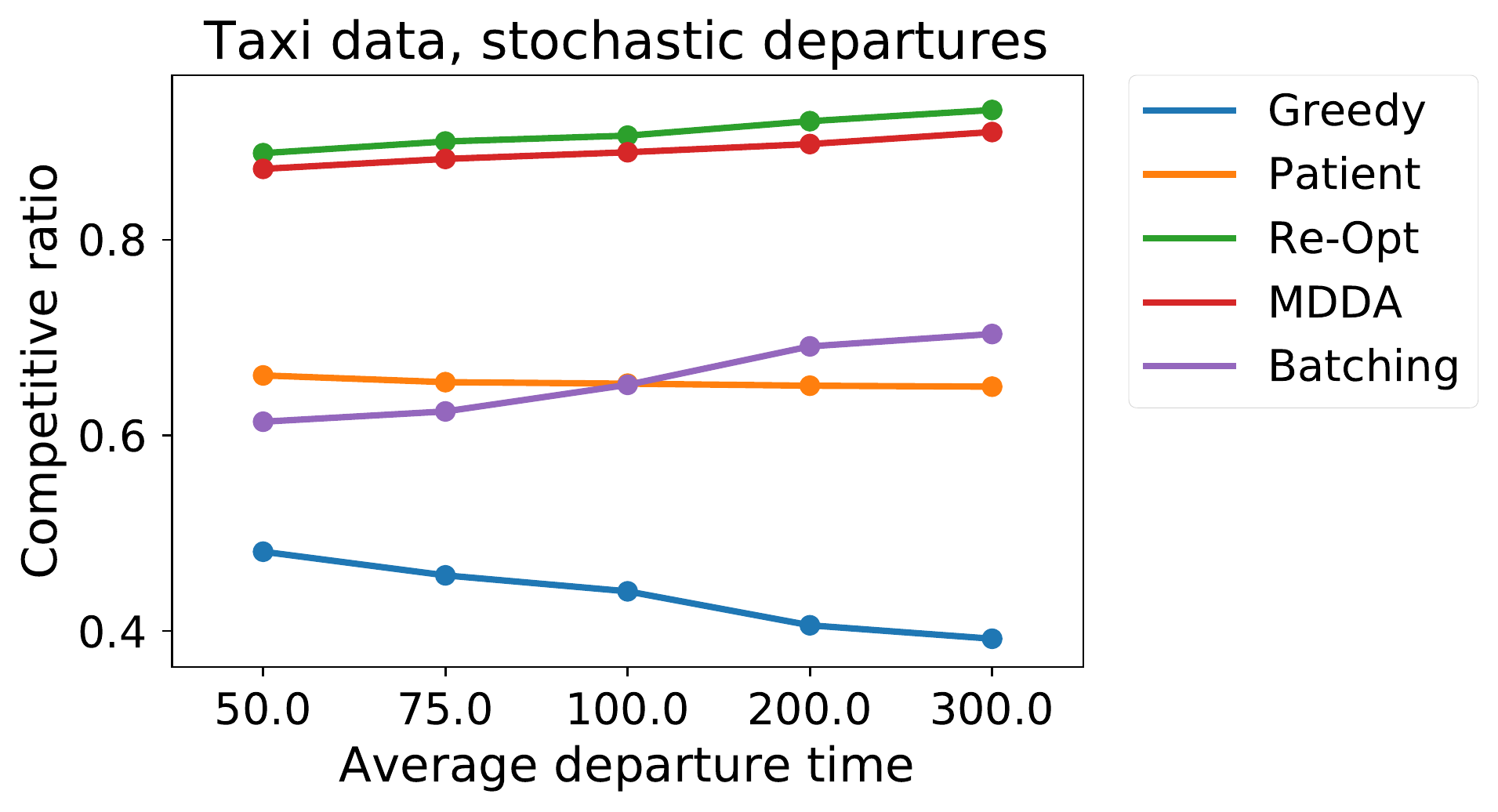}
\caption{Performance of our 4 algorithms on taxi data (weighted compatibility graph).}
\label{fig:taxi}
\end{figure}

Figure \ref{fig:kidney} provides the results for the kidney exchange simulations. In this case, the compatibility graph is unweighted, which implies that because we break ties in favor of the earliest arrival and departures are in order of arrival, \emph{Greedy} and \emph{Patient} are equivalent. Again, \emph{Batching} performs relatively well in the deterministic case (if $b<d$,  no vertex will depart before it is included in one batch), but very poorly in the stochastic case. 

\begin{figure}[ht!]
  \centering
\includegraphics[scale=0.33]{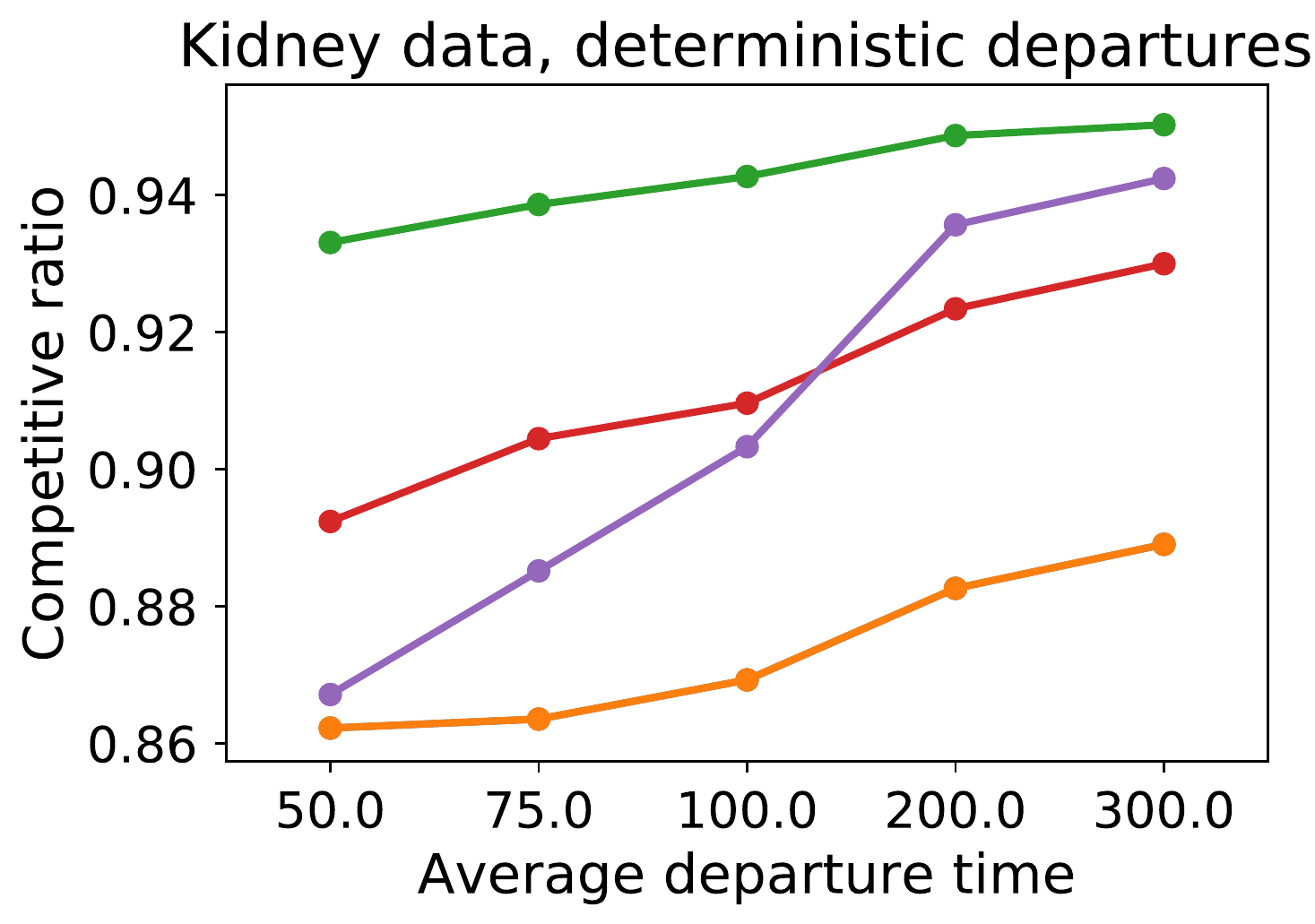}
\hspace{0.2cm}
\includegraphics[scale=0.33]{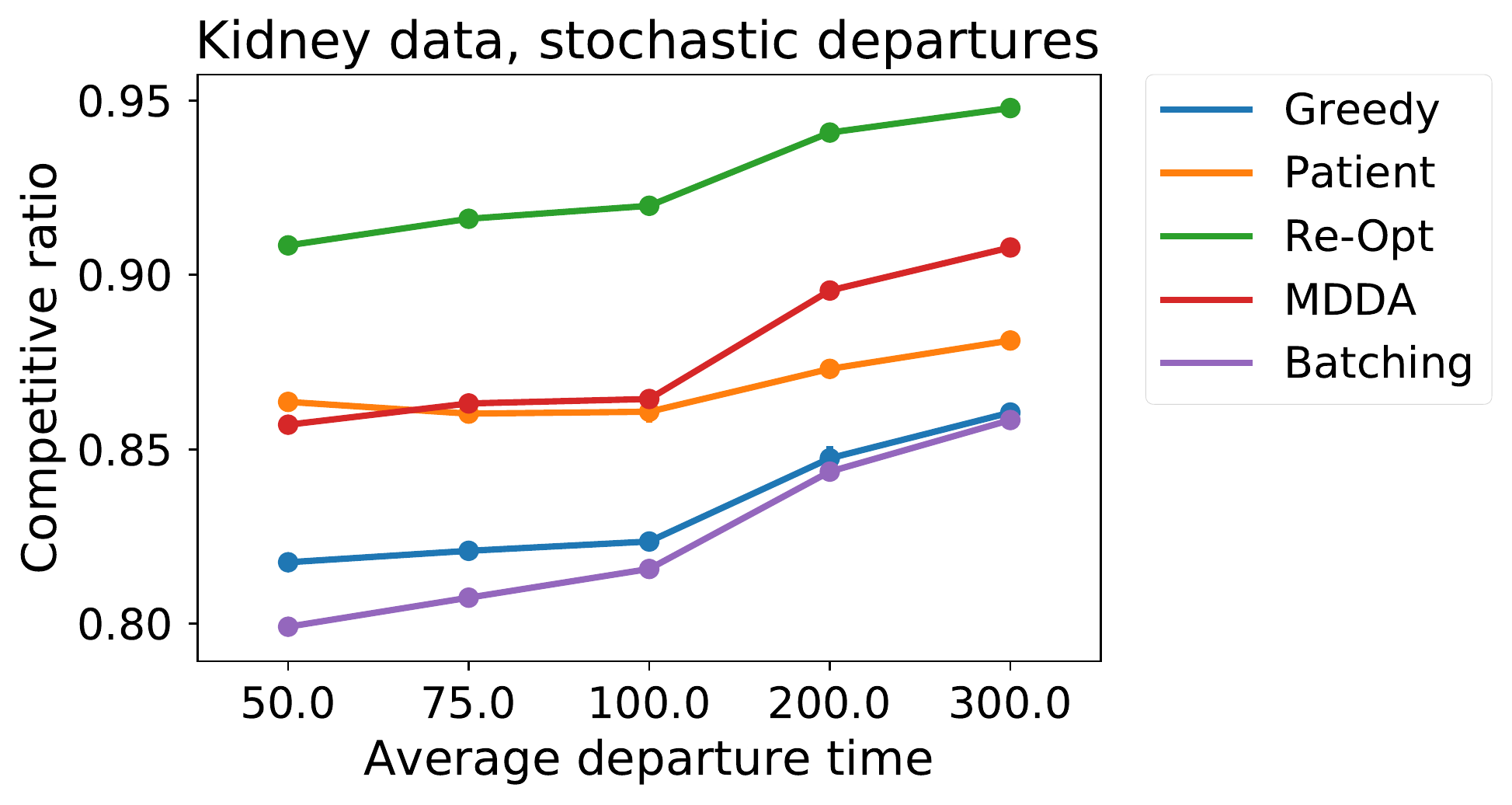}
\caption{Performance of our 4 algorithms on kidney data (unweighted compatibility graph).}
\label{fig:kidney}
\end{figure}

In both datasets, we observe that \emph{Re-Opt} outperforms all other algorithm, although in the cases where the departures are deterministic, \emph{Batching} performs close to \emph{Re-Opt} when the batch size $k$ is carefully chosen.  This shows the value of both having information on agents' departure times and also  subsequent optimization.

It is important to note that the experiments we ran do not take into account the cost of waiting. We think that a richer model that accounts for this would be an interesting future direction.
Two interesting areas for future work include the setting when the information about agent's departure times is uncertain, as well as models that are less restrictive than the adversarial setting (see, e.g., \citet{ozkan2016dynamic}).


\section{Conclusion}

This  paper introduces a model for dynamic matching, in which all agents arrive and depart over time. Match values are heterogenous and the underlying graph is arbitrary and can thus can be non-bipartite. We study algorithms that perform well across any sequence of arrivals and on any set of match values. 


Importantly,  our model imposes restrictions on the departure process and requires the  algorithm to know when vertices become critical. There are many interesting directions for future research. An immediate open problem is to close the gap between the upper bound of $\nicefrac{1}{2}$, and the achievable competitive ratios ($\nicefrac{1}{4}$ for deterministic departures, and $\nicefrac{1}{8}$ for stochastic departures). Independently, our model imposes that matches retain the same value regardless of when they are conducted. Being able to account for agent's waiting costs would also be very interesting.
Another direction is to design algorithms that achieve both a high total value but also a large  fraction of matched agents. 


%
%

\bibliography{bibliography}
\bibliographystyle{apalike}

\appendix
\section{Missing proofs}
\label{app:missing_proofs}
\subsection{Missing proof of Claim \ref{cl:monotonicity}}

\begin{proof}
  The proof of termination in \cite{bertsekas1988auction} relies on the introduction of a minimum bid $\epsilon$ in step $6$ of the auction algorithm to ensure that the algorithm does not get stuck in a cycle of bids of $0$. In the limit where $\epsilon \rightarrow 0$, the algorithm ressembles the \emph{hungarian algorithm} \cite{kuhn1955hungarian}. The idea is to search for an augmenting path along the edges for which the dual constraint is tight. If such a path is found, the matching is augmented, otherwise we perform simultaneous bid increases in way that ensures that prices $p$ and margins $q$ are still dual feasible.

  We assume that we are given at time $t$ an optimal matching $m$ and optimal duals $(p, q)$ corresponding to the graph with vertices $S_t, B_t$. We assume that we added a new vertex $b^*$ to $B'_{t} = B_t \cup \{b^*\}$, and that we initialized $q_{b^*} = \max_{s \in S_t} v_{s, b^*} - p_s$ 

  Initialize $m' = m$, $p' = p$, $q' = q$. Note that primal and dual feasibility are satisfied. Therefore, $(m', p', q')$ is optimal iff the following three complementary slackness condition are satisfied:
\begin{equation}
  \label{eq:CS:dual}
  \forall s \in S_t, v_{s,m(s)} = p'_s + q'_m(s).
  \tag{CS1}
\end{equation}
\begin{equation}
  \label{eq:CS:primal}
  \forall s \in S'_t, m(s) = \emptyset \implies p_s = 0.
  \tag{CS2}
\end{equation}
\begin{equation}
  \label{eq:CS:primal2}
  \forall b \in B'_t, m(b) = \emptyset \implies q_b = 0.
  \tag{CS3}
\end{equation}
  Note that \eqref{eq:CS:dual} and \eqref{eq:CS:primal} are already satisfied. If $q'_{b^*} = 0$ then \eqref{eq:CS:primal2} is also satisfied and we have an optimal solution.

  Suppose now that $q'_{b^*} > 0$. We will update $(m', p', q')$ in a way that maintains primal and dual feasibility, as well as \eqref{eq:CS:dual} and \eqref{eq:CS:primal}.

  Our objective is to find an augmenting path in the graph. First we will start by trying to find an alternating path that starts on $b$ and only uses edges for which the dual constraint is tight: $\mathcal{E} = \{(s,b) | s \in S'_t, b \in B'_t,  v_{s,b} = p'_s + q'_b\}$. Observe that by \eqref{eq:CS:dual} all the matched edges in $m$ are in $\mathcal{E}$. We will now successively color vertices as follows:
  \begin{itemize}
    \item[0.] Start by coloring $b^*$ in blue.
    \item[1.] For any blue buyer $b$, for any seller $s$ such that $(s, b) \in \mathcal{E}$ and $s \neq m(b)$, we color $s$ in red.
    \item[2.] For any red seller $s$, let $b = m(s)$, then color $b$ in blue.
  \end{itemize}

  Observe that there is an alternating path between $b^*$ and any red seller.
  If at one point we color an unmatched seller $s^*$ in red, this means that we have found an augmenting path from $b^*$ to $s^*$ that only utilizes edges in $\mathcal{E}$. In that case, we change $m'$ according to the augmenting path. Because of the way we chose edges in $\mathcal{E}$, \eqref{eq:CS:dual} is still satisfied. \eqref{eq:CS:primal} and \eqref{eq:CS:primal2} are now also satisfied, which means we have an optimal solution $(m', p',q')$.

  We terminate when we are unable to color vertices any further. In that case, let us define $\delta_1 = \min_{b \text{ blue}} q_b$. If $\delta_1 = 0$, then there exists $b \in B_{t}'$ with $q_b = 0$ and an alternating path form $b^*$ to $b$. We update $m'$ according to that path, and verify that all CS conditions are now satisfied.

  Suppose that $\delta_1 > 0$. Define
  \begin{equation}
    \delta_2 = \min_{b \text{ blue, } s \text{ not red}} \{ p_s + q_b - v_{s,b} \}.
    \label{eq:delta2}
  \end{equation}

  The fact that we cannot color any more vertices implies that $\delta_2 > 0$.
  Let $\delta = \min(\delta_1, \delta_2) > 0$. For every red seller $s$, we update $p'_s \leftarrow p'_s + \delta$. For every blue buyer $b$, we update $q'_b \leftarrow q'_b - \delta$. Observe that dual feasibility is still verified, as well as \eqref{eq:CS:dual}.

  If $\delta = \delta_2$, taking $(s,b)$ the argmin in \eqref{eq:delta2}, we now have such that $p'_s + q'_b - v_{s,b} = 0$ which means we can add $(s,b)$ to $\mathcal{E}$ and color $s$ in red.
  We will eventually have $\delta = \delta_1$, and this leads to $q_b = 0$ and we can terminate.
  This proves both the termination and correctness. Furthermore, monotonicity of the dual variables is also straightforward. Let us now prove the conservation property:
  \begin{equation}
    \label{eq:proof:conservation}
    \sum_{s \in S_t} p_s + \sum_{b \in B_t} q_b = \sum_{s \in S_t} p'_s + \sum_{b \in B} q'_b.
  \end{equation}
  Note that when we update the dual variables, then every seller we colored in red was matched in $S'$ and we colored that match in blue. Therefore, apart from the initial vertex $i$, there are the same number of red and blue vertices.
\end{proof}

\end{document}